\theoremstyle{remark}
\newtheorem{thm}{Theorem}
\newtheorem{lem}{Lemma}
\newtheorem{defn}{Definition}
\newcommand{\Int}{\mathrm{Int}}
\newcommand{\Ext}{\mathrm{Ext}}
\newcommand{\grad}{\nabla}
\newcommand{\I}{\mathscr{I}}
\newcommand{\Ocal}{\mathcal{O}}
\newcommand{\eps}{\epsilon}
\def\bea#1\eea{\begin{align}#1\end{align}}
\def\be#1\ee{\begin{equation}#1\end{equation}}
\title{Covariant Constraints on Hole-ography}
\author{Netta Engelhardt}
\author{and Sebastian Fischetti}
\affiliation{Department of Physics, University of California, Santa Barbara, CA 93106, USA}
\emailAdd{engeln@physics.ucsb.edu}
\emailAdd{sfischet@physics.ucsb.edu}
\abstract{Hole-ography is a prescription relating the areas of surfaces in an AdS bulk to the differential entropy of a family of intervals in the dual CFT.  In (2+1) bulk dimensions, or in higher dimensions when the bulk features a sufficient degree of symmetry, we prove that there are surfaces in the bulk that cannot be completely reconstructed using known hole-ographic approaches, even if extremal surfaces reach them.  Such surfaces lie in easily identifiable regions: the interiors of holographic screens. These screens admit a holographic interpretation in terms of the Bousso bound.  We speculate that this incompleteness of the reconstruction is a form of coarse-graining, with the missing information associated to the holographic screen.  We comment on perturbative quantum extensions of our classical results.}
\begin{document}

\maketitle
\flushbottom

%================================== Introduction ======================================

\section{Introduction}
\label{sec:intro}

The deep connection between entanglement and geometry has the potential to provide profound insights into the inner workings of a nonperturbative theory of quantum gravity. This connection has been made especially manifest in the AdS/CFT duality, which relates certain conformal field theories (CFT) without gravitational dynamics to string theory on asymptotically (locally) anti-de Sitter (AdS) backgrounds~\cite{Mal97,Wit98a}.  In this correspondence, the CFT lives on a representative of the conformal class of boundary metrics of the AdS space; we colloquially say that the CFT ``lives on the boundary of AdS''.  In the limit where the string theory is well approximated by classical gravity, the dual CFT is strongly coupled (large~$\lambda$) with a large number of colors (large~$N$).  Numerous observables in the CFT are dual in this limit to geometric objects in the (now classical) AdS space.

In this context, an issue of considerable interest is that of bulk reconstruction.  That is, given some CFT data, how much of the bulk data can be reconstructed, and how is this reconstruction performed?  Understanding how this reconstruction works in the limit where the AdS bulk is classical may offer insights into how to reconstruct the bulk perturbatively in~$1/N$, and even potentially in a nonperturbative regime (\textit{i.e.}~finite~$N$).

Because many CFT observables are dual to geometric bulk constructs in the large~$N$ limit, a fundamental bulk object to reconstruct is the geometry itself.  A promising approach has focused on reconstructing the bulk using extremal codimension-two surfaces anchored to the boundary: according to the Ryu-Takayanagi (RT) and Hubeny-Rangamani-Takayanagi (HRT) conjectures~\cite{RyuTak06, HubRan07}, such extremal surfaces are dual to the entanglement entropy of regions of the CFT.  In fact, arguments made by~\cite{CzeKar12, Wal12} suggest that the density matrix of a subregion of the CFT should be sufficient to reconstruct a portion of the bulk geometry (the domain of dependence of a set of relevant extremal surfaces~\cite{CzeKar12}, or the region bounded by a null hypersurface fired from an extremal surface~\cite{Wal12}). Indeed,~\cite{Czech:2014ppa,Czech:2015qta} explicitly offer such a construction for the spatial slices of AdS$_3$ by using the hole-ographic approach~\cite{Balasubramanian:2013rqa,Balasubramanian:2013lsa} of reconstructing bulk surfaces from boundary-anchored extremal surfaces (see also~\cite{Czech:2014wka,HeaMye14,Myers:2014jia} for related constructions).

The appeal of this approach stems from its conceptual simplicity: it relates (\textit{a priori}) any bulk surface to CFT observables.  Specifically, the area of an arbitrary bulk surface~$\gamma$ is dual to the so-called differential entropy of a family of boundary intervals.  The full range of validity of hole-ography remains unclear, though substantial headway in this direction was made in~\cite{HeaMye14}.  In this paper, we continue this exploration: in any (2+1)-dimensional spacetime (or in any higher dimensional spacetime with a sufficient degree of symmetry), we will state and prove general theorems that constrain how well surfaces in the bulk spacetime can be reconstructed from extremal surfaces anchored to the AdS boundary.  We interpret these constraints in terms of the so-called holographic screens introduced in~\cite{CEB2}.  We emphasize that while our strongest theorems only apply to systems that are ``effectively (2+1)-dimensional'', they are otherwise covariant.  In particular, while our results are constrained in more than two spatial dimensions to these effectively lower-dimensional setups, in (2+1)-dimensional bulk spacetimes we impose no restrictions except a generic condition and a condition on the Ricci tensor (the null curvature condition), which amounts to positivity of the stress tensor for a bulk obeying the Einstein equation. 

To give these statements some context, recall that the Hubeny-Rangamani-Takayangi (HRT) conjecture~\cite{HubRan07} states that in the large-$N$ limit, the entanglement entropy of a region $\mathcal{R}$ in the CFT can be constructed as follows.  Consider all bulk codimension-two extremal surfaces~$X$ homologous to the region~$\mathcal{R}$ on the AdS boundary\footnote{Note that the homology constraint (see~\textit{e.g.}~\cite{Headrick:2007km}) implies that~$X$ must be anchored to the AdS boundary on~$\partial\mathcal{R}$.}.  Then the entanglement entropy of~$\mathcal{R}$ is
\be
\label{eq:HRT}
S(\mathcal{R}) = \min_{X\sim \mathcal{R}} \frac{\mathrm{Area}(X)}{4G_N \hbar},
\ee
where~$G_N$ is the bulk Newton's constant and~$\sim$ means ``homologous to''.  Both the left- and right-hand sides of the above equation are na\"ively divergent and are understood to be regulated appropriately.  A generalization of this prescription exists for perturbatively quantum bulk spacetimes~\cite{FauLew13, EngWal14}.

\begin{figure}[t]
\centering
\includegraphics[page=1]{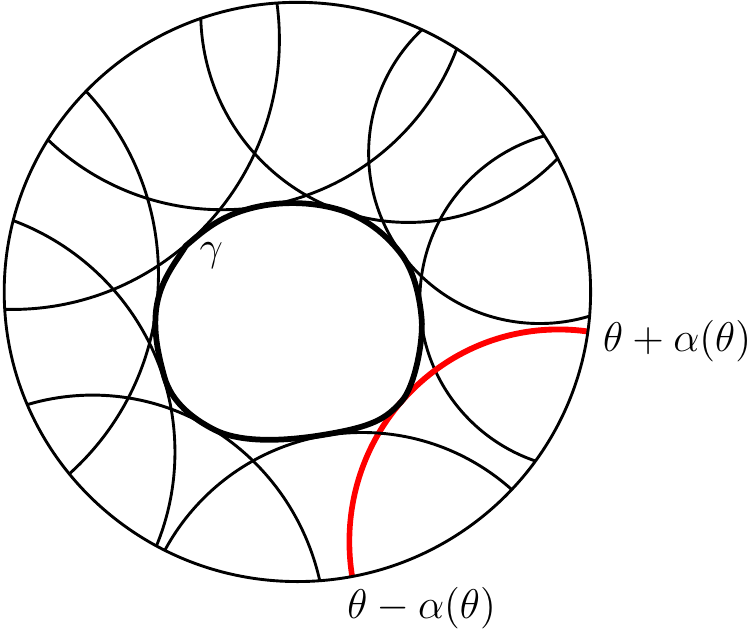}
\caption{An arbitrary closed curve~$\gamma$ on a static time slice of global~AdS$_3$.  The set of all geodesics tangent to~$\gamma$ define a family of regions on the boundary parametrized by a (possibly multi-valued) function~$\alpha(\theta)$.  The differential entropy of these regions gives the length of~$\gamma$.}
\label{fig:holeography}
\end{figure}

The key insight of hole-ography is that the HRT formula~\eqref{eq:HRT} can in certain cases be utilized to compute the area of arbitrary bulk surfaces.  In the pure AdS$_3$ context, consider an arbitrary curve~$\gamma$ lying on a static time slice, as shown in Figure~\ref{fig:holeography}.  At each point~$p$ on~$\gamma$, there is a unique geodesic tangent to~$\gamma$ at~$p$ anchored at the ends of some boundary interval~$I_\theta = (\theta - \alpha(\theta),\theta + \alpha(\theta))$; here information about~$\gamma$ is contained in the region function~$\alpha(\theta)$.  By the RT (and HRT) conjectures, the length of this geodesic computes the entanglement entropy~$S(\alpha)$ of the interval~$I_\theta$.  The result of~\cite{Balasubramanian:2013lsa} is that the length of~$\gamma$ can be computed from the boundary entanglement entropies as
\be
\label{eq:holeography}
\frac{\mathrm{length}(\gamma)}{4G_N \hbar} = \frac{1}{2} \int_0^{2\pi} d\theta \, \left. \frac{dS(\alpha)}{d\alpha} \right|_{\alpha = \alpha(\theta)}.
\ee
This construction has been generalized to non-static contexts and higher dimensions (admitting a sufficient degree of symmetry) in~\cite{HeaMye14,Czech:2014wka}. The quantity on the right-hand side was termed ``differential entropy'' in~\cite{Myers:2014jia}, related but not equivalent to the residual entropy discussed in~\cite{Balasubramanian:2013rqa, Balasubramanian:2013lsa}. In particular,~\cite{Hub14} showed that the reconstruction of bulk curves from the residual entropy is subject to strong restrictions; the differential entropy is not subject to the same constraints~\cite{HeaMye14}.

In order to use the hole-ographic approach for bulk reconstruction,~\cite{Czech:2014ppa} suggested that points in the bulk spacetime can be identified by effectively shrinking~$\gamma$ to arbitrarily small size around a point~$p$, so that the geodesics tangent to~$\gamma$ all intersect at~$p$; see Figure~\ref{subfig:holepoint}.  The resulting region function~$\alpha_p(\theta)$ is an extremum of a boundary action constructed only from~$S(\alpha)$, and thus provides a definition of bulk points from boundary data.  Similarly, to compute the geodesic distance between two points~$p$ and~$q$,~$\gamma$ is shrunk to a thin convex\footnote{In this context, a closed curve~$\gamma$ is convex if any geodesic connecting two points on~$\gamma$ lies entirely inside~$\gamma$.} curve that encircles~$p$ and~$q$, as shown in Figure~\ref{subfig:holedist}.  The region function for such a curve can be constructed from those that define the points~$p$ and~$q$,~$\alpha_p(\theta)$ and~$\alpha_q(\theta)$, and is therefore also constructed purely from boundary data.

\begin{figure}[t]
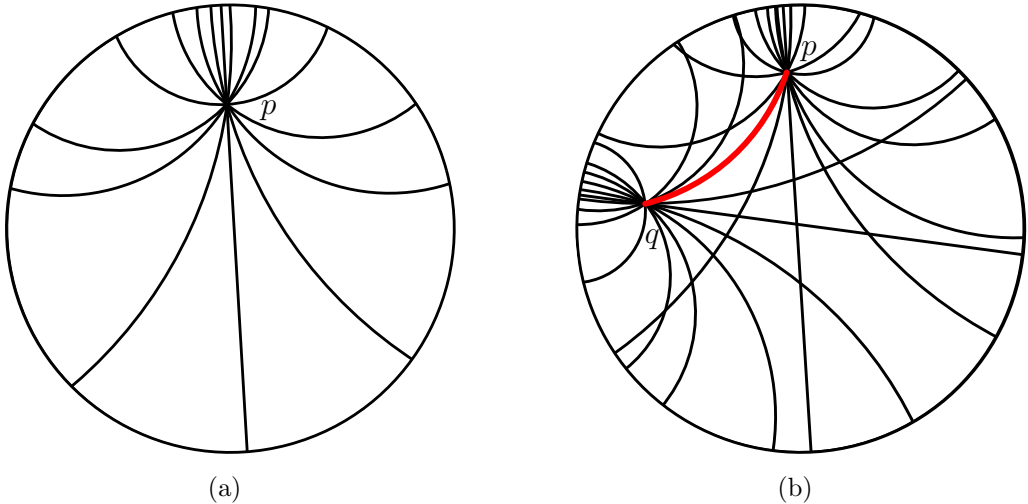

\centering
\subfigure[]{
\includegraphics[page=2]{Figures-pics}
\label{subfig:holepoint}
}
\hspace{1cm}
\subfigure[]{
\includegraphics[page=3]{Figures-pics}
\label{subfig:holedist}
}
\caption{\subref{subfig:holepoint}: reconstruction of bulk points via hole-ography.  The curve~$\gamma$ is shrunk to be arbitrarily small and centered at~$p$, so that~$p$ is identified by the common intersection of all the geodesics generated by~$\alpha(\theta)$.  \subref{subfig:holedist}: reconstruction of geodesic distances via hole-ography.  The curve~$\gamma$ is shrunk to be an arbitrarily thin convex curve (thick red line) encircling two points~$p$ and~$q$.  The geodesic distance between~$p$ and~$q$ is then given by the differential entropy of the resulting boundary intervals.}
\label{fig:reconstruction}
\end{figure}

This approach is clean and elegant and has close ties to integral geometry~\cite{Czech:2015qta} and to tensor networks and MERA~\cite{Swingle:2009bg,Swingle:2012wq}.  It is therefore quite natural to ask how much it can be generalized, and how much of the bulk it can reconstruct.

One obvious impediment to this reconstruction is the presence of extremal surface barriers (or relatedly, bulk regions that cannot be reached by any HRT surfaces -- ``entanglement shadows''~\cite{Freivogel:2014lja}).  These are surfaces that split the bulk spacetime in two such that no codimension-two extremal surface can be deformed to cross them~\cite{EngWal13}.  Then anything behind the barrier cannot be probed via boundary entanglement entropy.  Interestingly, in~\cite{Hubeny:2012ry} it was found that under appropriate restrictions, extremal surfaces anchored to one asymptotic boundary cannot be deformed to enter the event horizons of static black holes.  This barrier phenomenon was characterized for arbitrary spacetimes in~\cite{EngWal13}; in particular, such barriers do not include the event horizons of dynamical black holes.  Thus generically, an event horizon is a barrier only in stationary setting.

This is not so surprising: in a dynamical context, an event horizon is a global object, but from a local perspective, its only special property is the fact that its area is non-decreasing.  Since extremal surfaces are not sensitive to the global structure of the spacetime, there is no reason to expect the event horizon to generically play a special role in constraining their behavior.  A much more promising alternative is that of \textit{local} analogues of the event horizon: it is common to consider dynamical horizons~\cite{AshKri02} or trapping horizons~\cite{Hay93}, but we will instead consider more general objects called holographic screens~\cite{CEB2}.  These will be defined precisely in Section~\ref{sec:theorems} below, but they should roughly be thought of as objects that can be foliated by marginally trapped (or anti-trapped) surfaces.  Holographic screens can be constructed from an arbitrary foliation of a spacetime\footnote{This means a given spacetime may generally admit infinitely many holographic screens: one per null foliation.}; we will illustrate such a construction in Section~\ref{sec:theorems} (see Figure~\ref{fig:screenconstruction}).

Our motivation for focusing on these screens is fourfold.  First, there is a sense in which they are analogues of event horizons that are local in time and defined independenly of an asymptotic boundary.  Second, it was shown in~\cite{BouEng15a, BouEng15b} that under certain (fairly generic) assumptions, they obey an area law much like that obeyed by event horizons.  Third, they have a holographic interpretation by the Bousso bound~\cite{CEB1}: their area places an upper bound on the total entropy lying on one of the null surfaces orthogonal to them.  The fourth and last point is a technical one: holographic screens can be constructed from a null foliation of spacetime, and null congruences are very useful in constraining the behavior of codimension-two extremal surfaces.  Thus it should be relatively straightforward to derive constraints on such surfaces in the presence of holographic screens.

Interestingly, our results show that while there are indeed such constraints, they are subtle.  Holographic screens need not be barriers: codimension-two extremal surfaces may enter them.  However, we prove that when they do, the extremal surfaces must move through a certain  subregion of the interior of a holographic screen monotonically\footnote{In the special case where the extremal surfaces are anchored on a connected boundary region,  extremal surfaces must move monotonically through the entire interior of the holographic screen.}.  That is, they may never become tangent to one of the leaves of the null foliation that was used to construct the screen.  This puts a limit on how well hole-ographic approaches can reconstruct surfaces and geometry in the interior of a holographic screen, for any (sufficiently smooth) codimension-two spacelike surface~$\gamma$ lying inside the screen must be tangent to at least two of the null foliation surfaces.  This implies that there are points -- and more generally open subsets -- on~$\gamma$ that cannot be tangent to any boundary-anchored codimension-two extremal surface.

Thus we prove a no-go theorem for hole-ography: it cannot be used to reconstruct arbitrary surfaces contained in the interiors of holographic screens.  At best, it can reconstruct only portions of them, yielding some ``coarse-grained'' form of reconstruction.

The outline of this paper is as follows.  We develop and state our main theorems in Section~\ref{sec:theorems}.  In the interest of readability, we will defer the lengthier of our proofs to Appendix~\ref{app:proofs}.  In Section~\ref{sec:examples} we present some examples illustrating the ideas used in our construction, and highlighting previous instances in the literature where hints of our results first appeared.  Finally, in Section~\ref{sec:discussion} we discuss the relevance of our results to bulk reconstruction, as well as some possible generalizations, and conclude.

%=================================== Theorems =======================================

\section{Constraints on the Behavior of Extremal Surfaces}
\label{sec:theorems}

In this section, we will state the theorems discussed in Section~\ref{sec:intro}.  For pedagogical reasons, some results (specificially Lemma~\ref{lem:NARWHAL} and Theorem~\ref{thm:main}) will be presented for~(2+1) dimensions first.  Section~\ref{subsec:higherD} provides a generalization to higher dimensions.  For this reason, we will continue to discuss ``codimension-two surfaces'' rather than ``curves'', so the generalization to higher dimensions is natural.

Furthermore, while we will narrate the development of the theorems for purposes of pedagogy and clarity, we will leave a discussion and interpretation of their consequences to Sections~\ref{sec:examples} and~\ref{sec:discussion}.  Terms in quotation marks are intended to provide intuition, and will be made precise in due course.

\textbf{Preliminaries}  We will always consider a spacetime~$M$ that obeys the null curvature condition: $R_{ab}k^{a}k^{b} \geq 0$ everywhere for any null vector $k^a$.  Unless otherwise specified, we take all null vectors to be future-directed.  The term \textit{extremal surface} will always be used to refer to spacelike, $C^{2}$, codimension-two extremal surfaces.  A null hypersurface and the null geodesic congruence that generates it will be given the same name (\textit{e.g.}~$N$).  The expansion of a congruence~$N$ will be denoted~$\theta(N)$, while the expansion of a spacetime-filling family of congruences~$\{N_s\}$ will be denoted~$\theta(\{N_s\})$.  All unspecified conventions and definitions are as in~\cite{Wald}.

\subsection{General Behavior of Null Hypersurfaces and Extremal Surfaces}

First, we introduce a null foliation~$\{N_s\}$ of~$M$ into null hypersurfaces~$N_s$ which we shall call \textit{leaves}\footnote{Recall that the leaves~$\{N_s\}$ form a foliation of~$M$ if for every~$p \in M$,~$p$ lies on precisely one leaf~$N_s$.  Also, note that this foliation is arbitrary; any spacetime admits infinitely many such foliations.}. The leaves are permitted to have cusps, but only at intersections of their generators; a generator leaves a leaf if and only if it encounters an intersection with another generator of the same congruence.  Next, recall that any extremal surface~$X$ has two null normals, each of which generates a null congruence (as shown in Figure~\ref{fig:ExtremalCongruences}), and  the extremality condition is simply the requirement that the expansions of the null geodesic congruences tangent to these normals vanish on~$X$.  If~$X$ is tangent to a null hypersurface~$N$, the extremality of~$X$ constrains the expansion of~$N$:

\begin{figure}[t]
\centering
\includegraphics[page=4]{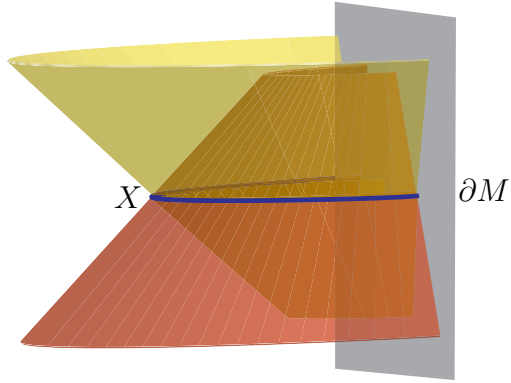}
\caption{The two null congruences of an extremal surface~$X$ of codimension two anchored to a timelike boundary~$\partial M$.}
\label{fig:ExtremalCongruences}
\end{figure}

\begin{lem}
\label{lem:aron}
Let $N$ be a null hypersurface in~$M$ and let $X$ be a codimension-two spacelike extremal surface which is tangent to $N$ at a point $p$; let~$\Ocal_p$ be an open neighborhood of~$p$.  Then:
\begin{itemize}
	\item If $X \cap \Ocal_p$ is nowhere to the past of $N$, then $\left. \theta(N)\right|_{p}\leq 0$;
	\item If $X \cap \Ocal_p$ is nowhere to the future of $N$, then $\left. \theta(N)\right|_{p}\geq 0$.
\end{itemize}
\end{lem}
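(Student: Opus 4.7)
The plan is to set up coordinates adapted to $N$ near $p$, realize $X$ as a graph over a cross-section of $N$, and reduce the statement to the sign of the Hessian of the resulting ``height function'' via extremality of $X$.  I would first introduce Gaussian null coordinates $(u,v,x^A)$ on a neighborhood of $p$ chosen so that $N=\{v=0\}$, the null generators of $N$ are integral curves of $\partial_u$ (affinely parametrized) with the generator through $p$ having tangent $k=\partial_u$, and the transverse coordinates $x^A$ label generators.  Furthermore, I would arrange for the cross-section of $N$ through $p$ to have tangent space $T_p X$ at $p$; this is possible because $T_p X$ is a codimension-one spacelike subspace of $T_p N$ transverse to $k$, exactly the sort of tangent space any cross-section through $p$ can carry.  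With orientations fixed so that $v>0$ corresponds to the future side of $N$, one has the standard expression $\theta(N)|_p=\tfrac{1}{2}h^{AB}\partial_u h_{AB}|_p$, where $h_{AB}$ is the induced metric on cross-sections.

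Next I would parametrize $X$ near $p$ as $\{(U(x),V(x),x)\}$ with $U(0)=V(0)=0$; the alignment of the cross-section with $T_p X$ forces $\partial_A U|_0=\partial_A V|_0=0$, so that $U,V=O(|x|^2)$.  The hypothesis that $X\cap\Ocal_p$ is nowhere to the past of $N$ then translates to $V(x)\geq 0$ locally, which together with the vanishing of $V$ and $\nabla V$ at the origin forces the Hessian $V_{AB}:=\partial_A\partial_B V|_0$ to be positive semi-definite; the symmetric case ``nowhere to the future'' gives $V_{AB}$ negative semi-definite.  The technical heart of the argument is the identity
\[
\theta_X(k)|_p \;=\; \theta(N)|_p \;+\; h^{AB}V_{AB}
\]
(up to an overall sign set by conventions), which I would derive by a direct computation of $\nabla_A k_B|_p$ for $k$ extended as a null normal field to $X$, using the Christoffel symbols of the Gaussian null coordinates.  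Geometrically this identity reflects the fact that $X$ and the chosen cross-section of $N$ share a tangent space at $p$, so their $k$-expansions can differ only through the second-order shape of $X$ off $N$, encoded in $V_{AB}$.

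Combining the identity with the extremality condition $\theta_X(k)|_p=0$ gives $\theta(N)|_p=-h^{AB}V_{AB}$, and positive-definiteness of the induced metric $h^{AB}$ means this quantity inherits the opposite sign of $V_{AB}$: $V_{AB}\succeq 0$ forces $\theta(N)|_p\leq 0$, matching the first case of the lemma, and $V_{AB}\preceq 0$ forces $\theta(N)|_p\geq 0$, matching the second.  The principal obstacle is establishing the key identity with the correct signs, which requires careful tracking of which derivatives of the metric vanish by virtue of the null foliation structure and which survive; once this is in place the remaining steps are routine linear algebra on Hessians.  Note that the null curvature condition plays no role here, as befits a purely local tangency constraint.
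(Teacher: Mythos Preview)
Your argument is correct. The paper's own ``proof'' is a one-line citation: the statement follows from Theorem~1 of~\cite{Wal10QST} or Theorem~4 of~\cite{Wal12}, as noted in~\cite{BouEng15b}. What you have written is essentially a self-contained derivation of the relevant special case of those comparison theorems, carried out in Gaussian null coordinates rather than in Wall's more abstract language of null-expansion inequalities for touching surfaces. The key identity $\theta_X(k)|_p = \theta(N)|_p + h^{AB}V_{AB}$ is right (one can sanity-check it, including the sign, against the Minkowski light-cone example in Figure~\ref{fig:lemma}, where $\theta(N)=2/r$ and $h^{AB}V_{AB}=-2/r$ combine to give zero), and once it is in hand the conclusion is indeed the elementary Hessian observation you describe. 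The trade-off is clear: the paper's citation is shorter and imports a more general result (Wall's theorems cover surfaces with one-sided mean-curvature bounds, not only extremal ones), whereas your coordinate computation is self-contained and makes the geometric mechanism --- that $X$ and the cross-section of $N$ share a tangent plane, so their $k$-expansions differ only by the second-order bending encoded in $V_{AB}$ --- fully explicit. Your closing remark that the null curvature condition is irrelevant here is also correct; this lemma is purely local.
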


\begin{proof}
As explained in~\cite{BouEng15b}, this follows directly from Theorem 1 of~\cite{Wal10QST} or Theorem 4 of~\cite{Wal12}.
\end{proof}

As a useful illustration of this lemma, consider extremal surfaces and light cones in flat space, as shown in Figure~\ref{fig:lemma}.

\begin{figure}[t]
\centering
\includegraphics[page=5]{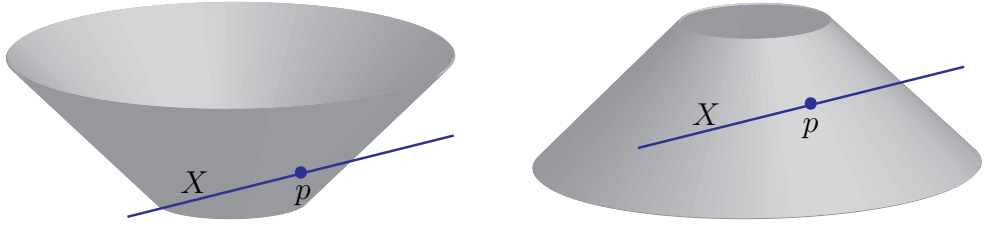}
\caption{An illustration for Lemma~\ref{lem:aron}.  In Minkowski space, an extremal surface~$X$ is just a plane (drawn here as a straight line).  If~$X$ is tangent to an expanding light cone, it lies nowhere to the cone's future, and the cone has positive expansion.  If~$X$ is tangent to a shrinking light cone, it lies nowhere to the cone's past, and thus the shrinking light cone has negative expansion.}
\label{fig:lemma}
\end{figure}

The converse of Lemma~\ref{lem:aron} is in general not true\footnote{We thank Aron Wall for pointing this out to us.}.  However, in the restricted case of a~(2+1)-dimensional spacetime, we can indeed prove its converse  -- see Section~\ref{subsec:higherD} for a generalization to higher dimensions:

\begin{lem}
\label{lem:NARWHAL}
Let $N$ be a null hypersurface in a~(2+1)-dimensional spacetime $M$ and let $X$ be a codimension-two spacelike extremal surface which is tangent to $N$ at a point $p$. Then there exists a small neighborhood $\Ocal_p$ of $p$ such that
\begin{itemize}
	\item If $\left. \theta(N)\right|_{p}> 0$, then $X \cap \Ocal_p$ is nowhere to the future of $N$;
	\item If $\left. \theta(N)\right|_{p}< 0$, then $X \cap \Ocal_p$ is nowhere to the past of $N$.
\end{itemize}
\end{lem}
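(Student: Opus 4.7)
The plan is to introduce a smooth ``side-detecting'' function $f$ on a small neighborhood $\Ocal_p$ of $p$ with $f = 0$ on $N \cap \Ocal_p$, $f > 0$ on $I^+(N) \cap \Ocal_p$, and $f < 0$ on $I^-(N) \cap \Ocal_p$, and then to Taylor expand $f \circ X$ in arc length $s$ about $p = X(0)$. The sign of the second derivative $(f\circ X)''(0)$ detects on which side of $N$ the surface $X$ lies near $p$, and I will show that this sign is controlled by $-\theta(N)|_p$.

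To build $f$, I choose the future-directed null vector field $l^a$ on $N$ satisfying $l\cdot k = -1$ and $l\cdot m = 0$, where $k^a$ is the future-directed affine null generator of $N$ and $m^a$ is the unit tangent to the one-dimensional spacelike cross-section of $N$ through $p$. For sufficiently small $\Ocal_p$, the null geodesics fired off $N$ in the $l$-direction foliate $\Ocal_p \setminus N$, so $f(q)$ can be defined as the signed affine parameter needed to reach $q$ from $N$ along such a geodesic. To confirm the sign convention, I work in Gaussian null coordinates $(u, v, y)$ adapted to this construction, where $f = u$, $\partial_v|_N = k$, $\partial_u|_N = l$; on $N$ the only nonvanishing metric components are $g_{uv}|_N = -1$ and $g_{yy}|_N = h(v,y) > 0$, and from this a displacement from $N$ with $u > 0$ is timelike-connected to nearby points of $N$ and hence lies in $I^+(N)$.

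Now I Taylor expand. Since $m \in T_pN = \ker df|_p$, both $(f\circ X)(0)$ and $(f\circ X)'(0)$ vanish. In $(2{+}1)$ dimensions an extremal codimension-two surface is simply a spacelike geodesic, so $\dot X^a\nabla_a \dot X^b = 0$ along $X$, and consequently
\begin{equation*}
(f\circ X)''(0) \;=\; m^a m^b \nabla_a \nabla_b f\big|_p .
\end{equation*}
The heart of the argument is the identity
\begin{equation*}
m^a m^b \nabla_a \nabla_b f\big|_p \;=\; -\,\theta(N)\big|_p ,
\end{equation*}
which I verify by a short calculation in the same coordinates. With $f = u$ and $m = \partial_y/\sqrt{h}$, the Hessian reduces to $-h^{-1}\Gamma^{u}_{yy}|_p$; the only surviving contribution to the Christoffel symbol gives $\Gamma^{u}_{yy}|_p = \tfrac12\partial_v h|_p$, so $m^a m^b \nabla_a \nabla_b f|_p = -\tfrac{1}{2h}\partial_v h|_p$, which by the standard formula for the expansion of a null congruence with one-dimensional spacelike cross-section is exactly $-\theta(N)|_p$.

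Combining these, $\theta(N)|_p > 0$ forces $(f\circ X)''(0) < 0$, hence $f(X(s)) < 0$ for all sufficiently small $s \neq 0$, so $X \cap \Ocal_p$ lies nowhere in $I^+(N)$; the case $\theta(N)|_p < 0$ is identical with signs reversed. The main technical subtlety -- and the essential reason for restricting to $(2+1)$ dimensions -- is the clean identification of the tangential Hessian of $f$ with a single scalar, the expansion. This rests on two features: first, $X$ is a geodesic, so no correction from its own acceleration enters $(f\circ X)''(0)$; and second, the spacelike cross-section of $N$ is one-dimensional, so the relevant Hessian of $f$ is a number rather than a tensor. Both features fail in higher dimensions, where extremality only forces the trace of the extrinsic curvature tensor of $X$ to vanish. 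In that case $X$ may curve toward $I^+(N)$ along some tangent directions and toward $I^-(N)$ along others, so the single scalar $\theta(N)|_p$ can no longer control the position of $X$ on its own.
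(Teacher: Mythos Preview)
Your proof is correct and follows essentially the same approach as the paper's: both arguments exploit that in $(2{+}1)$ dimensions $X$ is a spacelike geodesic, so the second-order deviation of $X$ from $N$ in the direction $m^a$ is governed entirely by ${}^N K_{ab}m^a m^b = \theta(N)|_p$. The paper packages this as a comparison of extrinsic curvatures on an auxiliary spacelike slice $\Sigma \supset X$, while you make the same comparison explicit via Gaussian null coordinates and the Taylor expansion of a side-detecting function; the content is identical.
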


\begin{proof}
Consider the first case, where the expansion of $N$ is positive.  At $p$, the null generator of $N$ agrees with a null normal of $X$; call this vector $k^a$.  Also let~$v^a$ be the unit vector tangent to~$X$ at~$p$ (which will also be tangent to~$N$, since~$X$ is), and let~$\ell^a$ be the other null normal to~$X$ at~$p$ normalized so~$k \cdot \ell = -1$.  Then the metric at~$p$ can be decomposed as
\be
g_{ab}|_p = -2k_{(a}\ell_{b)} + v_a v_b.
\ee
The expansion of~$N$ at~$p$ can then be written as
\be
\label{eq:expansioncondition}
0 < \theta(N)|_p = \left. \grad_a k^a \right|_p = \left. \ ^{N}\!K_{ab}v^{a}v^{b} \right|_p,
\ee
where $^{N}\!K_{ab}$ is the extrinsic curvature of $N$\footnote{Recall that the extrinsic curvature of a null codimension-one hypersurface with normal~$k^a$ is given (up to scaling) by
\be
K_{ab} = \frac{1}{2} \pounds_k g_{ab}.
\ee
For a codimension-two surface with null normals~$k^a$ and~$\ell^a$, the extrinsic curvature gets an extra index:
\be
K^{a}_{\phantom{a}bc} = \frac{1}{2} \left(\ell^a \pounds_{k} g_{bc} + k^a \pounds_{\ell} g_{bc}\right).
\ee}.
Next, consider a spacelike surface~$\Sigma$ containing~$X$.  Recall that~$^{N}\!K_{ab}v^{a}v^{b}|_p$ is a measure of how much~$N \cap \Sigma$ bends away from its tangent plane (\textit{i.e.}~the plane spanned by~$k^a$ and~$v^a$) with motion away from $p$ in the $v^{a}$ direction.  By extremality, the trace of the extrinsic curvature of $X$ vanishes:~$^{X}\! K^{c}_{\phantom{c}ab}v^{a}v^{b}|_{p}=0$, so $X$ must curve away from its tangent plane less than $N \cap \Sigma$ on a small open neighborhood of $p$. But this immediately implies that $X\cap \Ocal_p$ cannot lie in the future of $N$. The proof proceeds identically for the second case.
\end{proof}

Lemmata~\ref{lem:aron} and~\ref{lem:NARWHAL} give conditions on how extremal surfaces are allowed to be tangent to null hypersurfaces.  Crucially, these conditions do not impose any restrictions on the global structure of the null hypersurface -- it may be a hypersurface of non-constant expansion on a global scale, but as long as it has definite expansion on an open set that contains $p$, both lemmata are applicable.  This means that in any region of the spacetime with constant sign of~$\theta(\{N_s\})$ -- a scalar function on the spacetime -- an extremal surface can ``turn around'' at most once with respect to the foliation~$\{N_s\}$ (this notion will be made precise below).  In order to understand the general behavior of extremal surfaces, it is therefore useful to divide the spacetime into those regions where~$\theta(\{N_s\})$ is positive, and those where~$\theta(\{N_s\})$ is negative.

\subsection{Holographic Screens}

The division between regions of positive and negative~$\theta(\{N_s\})$ is provided quite naturally by so-called \textit{preferred} holographic screens, first defined in~\cite{CEB2}.  The idea is the following: given a spacetime foliation~$\{N_s\}$, move along each leaf~$N_s$ until its expansion changes sign.  By the focusing theorems (see \textit{e.g.}~\cite{Wald}), this sign change can happen at most once (since the expansion of~$N_s$ is non-increasing).  Thus, assuming a generic condition to be stated below, to each leaf~$N_s$ this procedure associates at most one codimension-two surface~$\sigma_s$ called a \textit{leaflet}\footnote{Note that this terminology goes against convention: typically the~$\sigma_s$ are referred to as ``leaves''.  Here we reserve the term ``leaves'' for the null hypersurfaces of the spacetime foliation.}.  The union of all such leaflets is a preferred holographic screen and provides the division we were looking for; see Figure~\ref{fig:screenconstruction} for an example of this construction. The term holographic screen is derived from the Bousso bound, which postulates that the leaflet is holographic: its area provides a bound on the entropy of $N_{s}$ \cite{CEB1, CEB2}.

\begin{figure}[t]
\centering
\includegraphics[page=6]{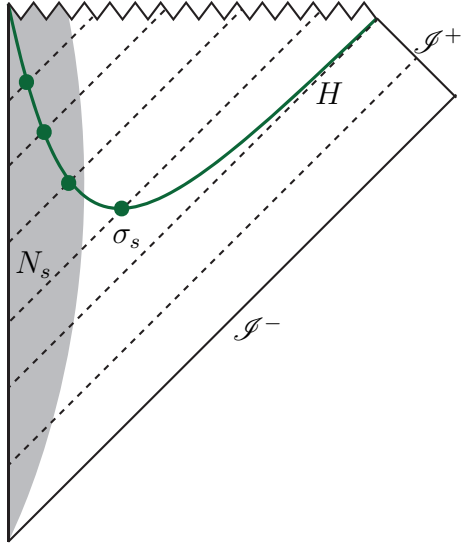}
\caption{Constructing a preferred holographic screen from a null foliation of a spacetime.  The dashed diagonal lines are the leaves of the foliation; the dot on each leaf marks the leaflet~$\sigma_s$ where the expansion of the leaf changes sign.  The union of all the leaflets is a preferred holographic screen.}
\label{fig:screenconstruction}
\end{figure}

Note that each leaflet~$\sigma_s$ has two null normal directions, each tangent to an associated null congruence.  By construction, one of these congruences has zero expansion.  We can use the sign of the expansion of the other congruence to label the ``type'' of holographic screen: in analogy with event and dynamical horizons, a screen will be called ``future'' (``past'') if it is foliated by marginally (anti-)trapped surfaces~\cite{BouEng15a, BouEng15b}.  This notion is made precise by the following definition:

\begin{defn}
\textit{Preferred future holographic screen}. A preferred future holographic screen $H$ associated to a null spacetime foliation $\{N_{s}\}$ is a smooth hypersurface such that for each leaf~$N_s$, the intersection $H\cap N_{s}$ is either empty or a codimension-two achronal surface $\sigma_{s}$ such that the two orthogonal null directions~$k^a_s$ and~$\ell^a_s$ to~$\sigma_{s}$ obey:
\bea
\theta_{k_s} &= 0, \\
\theta_{\ell_s} &< 0,
\eea
where $\theta_{k_s,\ell_s}$ are the expansions of the null geodesic congruences fired off of~$\sigma_s$ in the~$k^a_s$ and~$\ell^a_s$ directions.  The intersections $\sigma_{s}$ are called \textit{leaflets} of $H$, and the null normals~$k^a_s$ and~$\ell^a_s$ to all the leaflets define null vector fields~$k^a$ and~$\ell^a$ everywhere on~$H$.
\end{defn}

Past holographic screens are defined analogously, except that~$\theta_{\ell} > 0$, \textit{i.e.}~the leaflets are marginally \textit{anti}-trapped.  All discussions and proofs for past holographic screens proceed identically to future holographic screens via time reversal (all future constructs become past-directed), so for the rest of this section we will refer only to future holographic screens.

The above definition of holographic screens is too weak to guarantee that they be sufficiently well-behaved for our purposes.  But by further imposing some mild conditions, it is possible to ensure that the screens obey certain ``nice'' properties.  For this reason, we require the screen to be regular:

\begin{defn}
\label{def:regular}
\textit{Regular future holographic screen}.  A preferred future holographic screen is \textit{regular} if the following are true~\cite{BouEng15b}:
\begin{itemize}
	\item The null expansion of leaflets in the $k^a$ direction immediately decreases away from $H$: $k^a_s\grad_a \theta_{k_s}|_{\sigma_s} < 0$;
	\item The boundary of all spacelike subsets of $H$ within $H$ is the boundary of all timelike subsets of $H$ within $H$ (\textit{i.e.}~the only null portions of~$H$ are junctions between spacelike and timelike pieces);
	\item Every inextendible portion of $H$ with indefinite sign is either timelike or contains a complete leaflet; and
	\item Every leaflet is compact and splits a Cauchy surface containing it into two disjoint subsets.
\end{itemize}
\end{defn}

The first two assumptions can be viewed as types of generic conditions\footnote{However, these do not reduce to the usual generic condition used in the singularity theorems, see \textit{e.g.}~\cite{Wald}.}.  We will not have occasion to explicitly use the last two assumptions in this section, but they are required for certain properties of regular holographic screens to hold.  Also note that we will occasionally use the word ``screen'' to refer to a regular holographic screen when it will cause no ambiguity. 

The screens on which we will focus must divide the spacetime into two disjoint regions so that we can sensibly refer to their ``interior'' and ``exterior''.  Such screens will be referred to as \textit{splitting screens}; the holographic screen shown in Figure~\ref{fig:screenconstruction} is an example.  Moreover, if the screen is regular, we can uniquely define its interior and exterior:~\cite{BouEng15a, BouEng15b} showed that when~$k^a$ points to one side a regular screen, it is always the same side ($k^a$ may be tangent to the screen, but never switches from one side to the other).  Thus we will call the interior~$\Int(H)$ of a splitting future holographic screen~$H$ the region towards which the null vector field~$k^a$ points\footnote{This definition may seem backwards, since we typically think of the ``interior'' of a surface as the direction in which the expansion of its null normals is more negative.  However, note that since marginally trapped surfaces must always lie behind (or possibly on) the future event horizon of the spacetime~$M$,~$\Int(H)$ can never have any intersection with the asymptotic region of~$M$.  It is in this sense that this definition agrees with intuition.
}. The exterior~$\Ext(H)$ will be the complement in~$M$.

We are now equipped to make statements about the behavior of extremal surfaces in general spacetimes in the presence of holographic screens.  We need one more definition to make precise what we mean by an extremal surface ``turning around'':

\begin{defn}
\textit{Turning and inflection points}.  We say that an extremal surface $X$ has a \textit{pivot point} at a point $p$ if it is tangent to a leaf~$N_s$ at~$p$.  $X$ is said to have a \textit{turning point} at $p$ if in a small neighborhood of $p$, $X$  lies nowhere to the past or nowhere to the future of~$N_s$.  Otherwise,~$X$ is said to have an \textit{inflection point} at~$p$.  Moreover, if an extremal surface~$X$ has a turning point in some region~$R \subset M$, then we say~$X$ \textit{turns around} in~$R$.  See Figure~\ref{fig:pivot} for an illustration.
\end{defn}

Note that turning points and the notion of turning around are dependent on the foliation~$\{N_s\}$.  Also note that by definition, if $N$ is any null splitting hypersurface, then any surface $Q$ which has a turning point on $N$ is (in some small neighborhood) in its past or future.  In the former case, we will say $Q$ is \textit{tangent to $N$ from the past}, and in the latter we will say $Q$ is \textit{tangent to $N$ from the future}.

\begin{figure}[t]
\centering
\includegraphics[page=7]{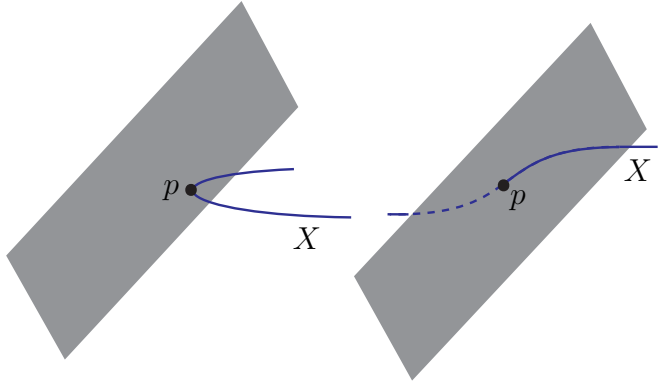}
\caption{An extremal surface tangent to a foliation leaf has a pivot point which can either be a turning point (left) or an inflection point (right).}
\label{fig:pivot}
\end{figure}

\subsection{Theorems}

We can now state our first theorem, which is simply a precise rephrasing of the heuristic discussion above:

\begin{thm}
\label{thm:traffic}
Let $R$ be a region such that~$\theta(\{N_s\})$ has a definite sign everywhere in~$R$, and let~$X$ be a (codimension-two) extremal surface.  Then any connected portion of $X$ in~$R$ can turn around at most once and has no inflection points if~$M$ is~(2+1)-dimensional.  In particular, if~$H$ is a regular splitting future holographic screen, any connected portion of~$X$ in~$\Int(H)$ can turn around at most once. 
\end{thm}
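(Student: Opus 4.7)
The plan is to encode pivot points of $X$ as critical points of a smooth scalar function on $X$. Because $\{N_s\}$ foliates $M$, we may parametrize it by a smooth function (still called) $s: M \to \mathbb{R}$ whose level sets are the leaves. A point $p \in X$ is then a pivot point precisely when $T_pX \subseteq T_pN_{s(p)} = \ker(ds)|_p$, equivalently $d(s|_X)|_p = 0$. Thus pivot points of $X$ are exactly the critical points of the smooth function $s|_X$, and in $(2+1)$ dimensions the theorem reduces to a statement about critical points of a smooth real-valued function on a connected $1$-dimensional $C^2$ submanifold.

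The no-inflection-points claim follows immediately from Lemma~\ref{lem:NARWHAL}: at any pivot point $p \in X \cap R$ the expansion $\theta(N_{s(p)})|_p$ has a definite sign inherited from $R$, so Lemma~\ref{lem:NARWHAL} forces $X$ to lie entirely on one side of $N_{s(p)}$ in a small neighborhood of $p$ -- nowhere to the future if $\theta > 0$, nowhere to the past if $\theta < 0$. Either way, $p$ is a turning point.

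For the ``at most one turning around'' claim, fix a connected portion $C \subseteq X \cap R$ and assume without loss of generality that $\theta(\{N_s\}) > 0$ on $R$. By Lemma~\ref{lem:aron}, at any turning point of $C$ the alternative ``$X$ nowhere to the past'' would force $\theta \leq 0$, contradicting $\theta > 0$; hence $X$ is locally nowhere to the future, so $s|_C$ attains a local maximum at each turning point. By the previous paragraph every critical point of $s|_C$ is such a turning point, and so every critical point is a local maximum. A smooth function on a connected $1$-dimensional curve, however, cannot have two distinct strict local maxima without an intermediate strict local minimum -- which would itself be a critical point, hence a local maximum, a contradiction. The only escape is for $s|_C$ to be constant on some open arc, along which $C$ coincides with a leaf $N_s$; but then Lemma~\ref{lem:aron} applied with both ``nowhere to the past'' and ``nowhere to the future'' gives $\theta(N_s) = 0$ on that arc, once more contradicting $\theta > 0$ on $R$.

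Finally, the ``in particular'' statement reduces to showing $\theta(\{N_s\}) < 0$ throughout $\Int(H)$ and then applying the first part with $R = \Int(H)$. For each leaf $N_s$ meeting $H$, the intersection is the leaflet $\sigma_s$, where $\theta(N_s) = 0$ by construction; the regularity condition $k^a \grad_a \theta_{k_s}|_{\sigma_s} < 0$ together with the Raychaudhuri equation keeps $\theta(N_s) < 0$ along the entire portion of $N_s$ on the $k^a$-side of $\sigma_s$, which by definition is $\Int(H)$. The main obstacle I expect is justifying this last step carefully: one needs the full splitting and regularity hypotheses of Definition~\ref{def:regular} to guarantee that every point of $\Int(H)$ really is covered by such a negative-expansion portion of some leaf, and that no leaves lacking a leaflet sneak in. By comparison, the degenerate locally-constant case in the preceding paragraph is secondary and is cleanly dispatched by Lemma~\ref{lem:aron}.
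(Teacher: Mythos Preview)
Your proposal is correct and follows essentially the same approach as the paper's proof: both use Lemma~\ref{lem:NARWHAL} to rule out inflection points, Lemma~\ref{lem:aron} to pin down the direction of every turning point, and then an intermediate-value-type argument (two turning points of the same type force one of the opposite type in between, contradiction). Your framing in terms of critical points of the scalar $s|_X$ is a mild repackaging of the paper's geometric language (``tangent from the past/future'' becomes ``local min/max of $s$''), and your explicit treatment of the degenerate locally-constant case and of why $\theta(\{N_s\}) < 0$ on $\Int(H)$ are details the paper simply asserts ``by construction.''
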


For a detailed proof, see Appendix~\ref{subapp:traffic}.  For a pictoral proof, see Figure~\ref{fig:traffic}: if a connected portion of~$X$ in~$R$ has more than one turning point, at least one such turning point must violate Lemma~\ref{lem:aron}.  

\begin{figure}[t]
\centering
\includegraphics[page=8]{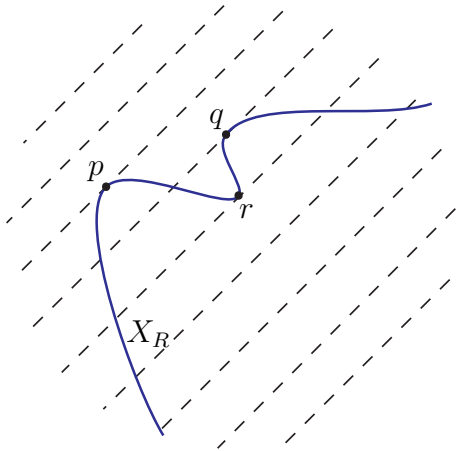}
\caption{$X_{R}$ is any connected portion of~$X$ in~$R$.  It cannot have multiple turning points without violating Lemma~\ref{lem:aron}.}
\label{fig:traffic}
\end{figure}

Theorem~\ref{thm:traffic} and the lemmata make local statements: they make no use of the global structure of the spacetime~$M$.  We now focus on the asymptotically locally AdS case\footnote{See~\cite{Fischetti:2012rd} for the definition of asymptotically locally AdS spacetimes.}, where~$M$ has a timelike boundary~$\partial M$ to which the extremal surfaces are anchored.  Then within a subregion of the interior of a holographic screen, Theorem~\ref{thm:traffic} can be strengthened significantly. We will call this region the \textit{umbral} region\footnote{We will show that~$U(H)$ is similar to but more general than the partial shadows of~\cite{Freivogel:2014lja}, hence the nomenclature.} $U(H)$:  

\begin{defn} \textit{Umbral region}.  Let $H$ be a regular splitting future holographic screen.  Consider the null congruences~$\{L_s\}$ generated from each leaflet by firing null geodesics in the~$\ell^a$ direction\footnote{As with the~$N_s$, we will take the generators of~$L_s$ to leave~$L_s$ at local and non-local intersections.}, and suppose that these foliate~$\Int(H)$. Let $\sigma_{s,s'}=N_{s}\cap L_{s'}$. The umbral region $U(H)$ is the union of all those $\sigma_{s,s'}$ with no intersection with the past of $H$:
\begin{equation}
U(H) \equiv \bigcup \sigma_{s,s'} \ : \ \sigma_{s,s'}\cap I^{-}(H)=\varnothing .
\end{equation} 
See Figure~\ref{fig:umbral} for an illustration.
\end{defn}
Note that if $H$ is achronal, $U(H)$ is the entire interior of $H$.

\begin{figure}[t]
\centering
\includegraphics[page=24]{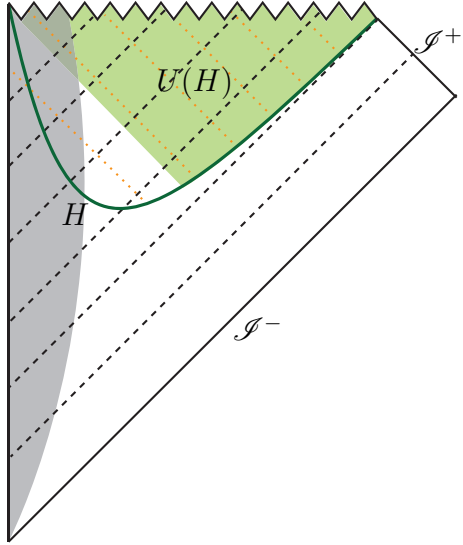}
\caption{In a collapsing star geometry, the future holographic screen $H$ (solid green) has indefinite signature.  Its umbral region~$U(H)$ is the (green) shaded region in the interior of~$H$; by construction, the future of~$U(H)$ has no intersection with~$H$.}
\label{fig:umbral}
\end{figure}

\begin{thm}
\label{thm:main}
Let $H$ be a regular splitting future holographic screen in a~(2+1)-dimensional asymptotically locally AdS spacetime~$M$.  Then no boundary-anchored extremal surface can have a pivot point in~$U(H)$.  In particular, if $H$ is achronal, no such extremal surfaces can have a pivot point in~$\Int(H)$.
\end{thm}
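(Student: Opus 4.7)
The plan is to derive a contradiction from assuming a boundary-anchored extremal surface $X$ has a pivot at $p\in U(H)$. The strategy is to combine monotonicity of $X$ with respect to both the foliation $\{N_s\}$ (which defines $H$) and the ingoing null congruence $\{L_{s'}\}$ (which foliates $\Int(H)$), and then to use the umbral condition to forbid the exit of $X$ from $\Int(H)$ through $H$ that boundary-anchoring demands.

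I would begin by invoking regularity of $H$ to conclude $\theta(\{N_s\})<0$ on $\Int(H)$; Theorem~\ref{thm:traffic} then makes $p$ a turning point (no inflections in 2+1D) and the unique pivot of the connected component $X_p\subset X\cap\Int(H)$ containing $p$, while Lemma~\ref{lem:NARWHAL} places $X_p$ locally in $J^+(N_s)$. By focusing together with $\theta_\ell<0$ on each leaflet, $\theta(L_{s'})<0$ throughout $L_{s'}\cap\Int(H)$, so Theorem~\ref{thm:traffic} further gives at most one pivot of $X_p$ for the $\{L_{s'}\}$ foliation. Uniqueness of the $\{N_s\}$-pivot at $p$ together with the local $J^+(N_s)$-behaviour forces $X_p\cap N_s=\{p\}$ and hence $X_p\subset J^+(N_s)$; parallel reasoning for $\{L_{s'}\}$, combined with a local wedge analysis at $p$ and a connectedness argument on $M\setminus(N_s\cup L_{s'})$, should upgrade this to $X_p\subset J^+(\sigma_{s,s'})$ where $\sigma_{s,s'}=N_s\cap L_{s'}$.

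The contradiction then comes from the umbral condition: since $\sigma_{s,s'}\cap I^-(H)=\varnothing$, every $q\in\sigma_{s,s'}$ satisfies $I^+(q)\cap H=\varnothing$, so $I^+(\sigma_{s,s'})\cap H=\varnothing$. Moreover the null boundary $\partial J^+(\sigma_{s,s'})$ is generated by the future $+k$- and $+\ell$-directed null geodesics off $\sigma_{s,s'}$, which lie on the portions of $N_s$ and $L_{s'}$ ``deeper'' inside $\Int(H)$ and hence are disjoint from the leaflets $\sigma_s$ and $\sigma_{s'}$ where these surfaces meet $H$. Combining these gives $J^+(\sigma_{s,s'})\cap H=\varnothing$. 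But $X$ boundary-anchored with $\partial M\subset\Ext(H)$ forces $X_p$ to exit $\Int(H)$ through $H$, yielding a point in $X_p\cap H\subset J^+(\sigma_{s,s'})\cap H$, a contradiction.

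The main obstacle I anticipate is upgrading the $J^+(N_s)\cap J^+(L_{s'})$-containment (which follows directly from the two monotonicities) to the stronger $J^+(\sigma_{s,s'})$-containment. In 2+1 dimensions $T_pN_s$ is a two-dimensional null tangent space carrying a one-parameter family of spacelike directions, so $X$'s tangent at $p$ is not automatically forced to lie in $T_p\sigma_{s,s'}$; the case where $X$ is tangent to $N_s$ but transverse to $L_{s'}$ at $p$ thus requires a separate treatment, presumably using strict monotonicity of the $s'$-index along $X_p$ together with the uniqueness of the $\{L_{s'}\}$-pivot to still confine $X_p$ to the correct causal wedge and to reach a contradiction with the umbral condition.
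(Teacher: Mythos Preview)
Your overall architecture matches the paper's: argue by contradiction, use both the~$\{N_s\}$ and~$\{L_{s'}\}$ foliations, and split into the cases where~$X$ is tangent to~$L_{s'}$ at~$p$ versus transverse to it. Your Case~1 (simultaneous tangency) is essentially the paper's argument: Lemma~\ref{lem:NARWHAL} applied to both foliations gives~$\Ocal_p\cap X\subset J^+(N_s)\cap J^+(L_{s'})\subset J^+(\sigma_{s,s'})$ locally, and then the sign of the expansions prevents any tangency from the past, so~$X$ is trapped in~$J^+(\sigma_{s,s'})$ globally; the umbral condition then blocks~$X$ from reaching~$\partial M$. (The paper closes this case slightly differently, by invoking that marginally trapped surfaces lie behind the event horizon so~$J^+(\sigma_{s,s'})\cap\partial M=\varnothing$, rather than your~$J^+(\sigma_{s,s'})\cap H=\varnothing$, but either endgame works.)

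The genuine gap is in Case~2. Your stated plan there is to ``still confine~$X_p$ to the correct causal wedge'' via monotonicity of the~$s'$-index. This cannot succeed as written: if~$X$ is transverse to~$L_{s'}$ at~$p$, then in any neighborhood of~$p$ part of~$X$ lies in~$I^-(L_{s'})$, so~$X_p$ is \emph{not} contained in~$J^+(\sigma_{s,s'})$, and no amount of monotonicity in~$s'$ changes that. The paper's Case~2 does not attempt confinement at all. Instead it runs an \emph{exit argument}: transversality guarantees that an open piece of~$X$ lies in~$J^+(\sigma_{s,s'})$; since~$\partial X\subset\partial M$ has no intersection with~$J^+(\sigma_{s,s'})$, that piece must eventually leave~$J^+(\sigma_{s,s'})$, which forces a tangency \emph{from the past} to some~$N_s$ or~$L_{s'}$ inside~$U(H)$; but the expansions of both foliations are negative there, so Lemma~\ref{lem:aron} forbids such a tangency. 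Replace your confinement idea with this exit-and-forbidden-tangency argument and your proof will be complete.
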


For a detailed proof, see Appendix~\ref{subapp:main}. For a sketch of part of the proof, consider an extremal surface~$X$ with a turning point on some leaf~$N_m$ in~$U(H)$.  If at this turning point~$X$ is also tangent to a leaf~$L_m$, then Lemma~\ref{lem:NARWHAL} implies that~$X$ must lie to the future of~$N_m$ and~$L_m$, and therefore to the future of their intersection; see Figure~\ref{fig:mainsketch}.  But by assumption, this future can have no intersection with $H$, implying that $X$ must live entirely in the interior of $H$; therefore $X$ cannot be boundary-anchored.  The case when~$X$ is not tangent to a leaf~$L_m$ is more complicated, but similar in spirit.

One of the appealing properties of regular future holographic screens found in~\cite{BouEng15a} is that they obey an area law even when they have indefinite signature.  It is therefore natural to ask whether our theorem applies to such screens.  It well may be the case that it does, but the approach used in the proof above cannot be used for non-achronal screens.  To see why, consider Figure~\ref{fig:prooffail}: an extremal surfaces $X$ can have a turning point somewhere in the past of the screen, in which case it may exit the interior of the screen through a timelike portion.

Theorem~\ref{thm:main} may still be true for regular future holographic screens of indefinite signature, but it is not clear to us how a proof of such a statement would proceed.  However, some progress can be made in higher dimensions, as we will now discuss.

\newpage

\begin{figure}[h]
\centering
\includegraphics[page=23]{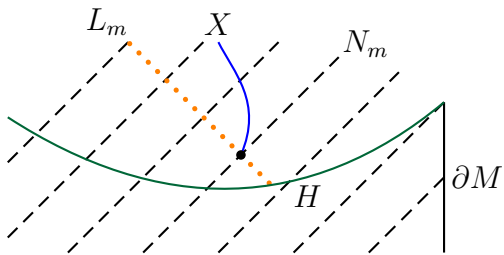}
\caption{If an extremal surface~$X$ (solid blue) is tangent to leaves~$N_m$ (dashed black) and~$L_m$ (dotted orange) inside the umbral region of a holographic screen~$H$ (solid green), it must lie entirely in the future of their intersection (black dot), and therefore cannot be boundary-anchored.  Note that we have suppressed a spatial direction in this figure, which is why~$X$ appears timelike and ends at the black dot.  It is actually spacelike everywhere and tangent to the dot in the suppressed direction.}
\label{fig:mainsketch}
\end{figure}

\vspace{2cm}

\begin{figure}[h]
\centering
\includegraphics[page=12]{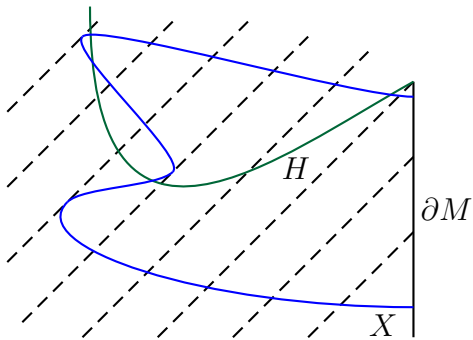}
\caption{Here we show how an extremal surface~$X$ (solid blue) may potentially have a turning point in the interior of a regular future holographic screen of indefinite signature.  Because the turning point lies in the past of the screen, the extremal surface may exit the screen through the timelike portion; none of the turning points shown are forbidden by Lemma~\ref{lem:aron}.  Note that the~$X$ is everywhere spacelike; the apparent mixed signature here is only due to the suppression of the extra spatial dimension.}
\label{fig:prooffail}
\end{figure}

\newpage

\subsection{Higher Dimensions}
\label{subsec:higherD}

Theorem~\ref{thm:main} relies heavily on Lemma~\ref{lem:NARWHAL}, which is only valid in~(2+1)-dimensional spacetimes.  If we were to attempt a na\"ive extension of it to~$(d+1)$ dimensions, we would encounter a problem: equation~\eqref{eq:expansioncondition} and the extremality condition would become
\begin{subequations}
\label{eqs:higherd}
\bea
0 &< \ ^N \! K_{ab} \left. \left(v^a v^b + \sum_i \xi^a_{(i)} \xi^b_{(i)}\right)\right|_p, \\
0 &= \ ^X \! K^a_{\phantom{a}bc} \left. \left(v^b v^c + \sum_i \xi^b_{(i)} \xi^c_{(i)}\right)\right|_p,
\eea
\end{subequations}
where the sum runs over an additional~$(d-2)$ orthonormal spatial vectors~$\xi^a_{(i)}$ that are orthogonal to~$v^a$ and tangent to~$X$ and~$N$ at~$p$.  These summed expressions do not allow us to separately compare the bending of~$X$ and~$N$ in different directions, so the proof does not go through as it did before.

It is clear from the above considerations, however, that a version of Lemma~\ref{lem:NARWHAL} will remain true in higher dimensions if we require that all of the~$\xi^a_{(i)}$ have trivial contraction with the extrinsic curvatures of~$X$ and~$N$.  In such a case, only the~$v^a$ terms in equations~\eqref{eqs:higherd} remain, and the proof of the lemma proceeds as in the~(2+1)-dimensional case.  We therefore define:

\begin{defn}
\textit{Reducibility to~(2+1) dimensions}.  Let~$S$ be a surface of codimension at most two in a~$(d+1)$-dimensional spacetime~$M$.  We will say that~$S$ is \textit{reducible to~(2+1) dimensions} (or \textit{reducible} for short) if there exist~$(d-2)$ vector fields~$\xi_{(i)}^a$,~$i = 1,\ldots,d-2$ in~$M$ that are everywhere spacelike\footnote{The~$\xi_{(i)}^a$ may be singular on a set of measure zero.} such that~$S$ is tangent to the~$\xi^a_{(i)}$ everywhere, and for each~$i$
\be
\ ^S K^a_{\phantom{a}bc} \, \xi^b_{(i)} \, \xi^c_{(i)} = 0,
\ee
where~$\ ^S K^a_{\phantom{a}bc}$ is the extrinsic curvature of~$S$.  If multiple reducible surfaces share the~$\{\xi^a_{(i)}\}$, then they are \textit{simultaneously} reducible.
\end{defn}

For an arbitrary surface~$S$, the reducibility condition is simply a constraint on its allowed behavior.  However, we will specifically require that extremal surfaces be reducible: this will in general only be possible when the spacetime exhibits a sufficient amount of symmetry. In particular, note that in spacetimes obeying the generalized planar symmetry of~\cite{HeaMye14} (see their Section~3.3 for a full definition), extremal surfaces that have this symmetry are reducible.  For example, spacetimes with planar or spherical symmetry provide a simple setup admitting reducible extremal surfaces.  Most significantly,  Lemma~\ref{lem:NARWHAL} still holds for reducible extremal surfaces and foliations, and therefore so does Theorem~\ref{thm:main}.

We have therefore shown that \textit{Theorems~\ref{thm:traffic} and~\ref{thm:main} will hold in any~($d$+1)-dimensional spacetime if the foliations~$\{N_s\}$ and all extremal surfaces~$X$ under consideration are simultaneously reducible to~(2+1) dimensions}.

As mentioned at the end of the previous section, we can actually do more in higher dimensions: for~$d > 2$, it is possible for extremal surfaces to ``cap off'' (for example, where the size of spheres spanned by the~$\xi^a_{(i)}$ shrinks to zero).  

Before we state and prove the theorem, however, we require a restricted notion that takes into account the global structure of the extremal surface. Specifically, we restrict our analysis to so-called \textit{H-deformable} extremal surfaces~\cite{EngWal13}: these are surfaces that can be deformed to lie entirely in the exterior of a screen~$H$ while still being kept extremal (for a precise definition, see Appendix~\ref{app:proofs}).  We therefore have the following theorem, which holds for the entire interior of an arbitrary regular holographic screens (\textit{i.e.}~it is not restricted to the umbral region):

\begin{thm}
\label{thm:mainconnected}
Let~$M$ be an asymptotically locally AdS spacetime, and let $H$ be a regular splitting future holographic screen constructed from a reducible foliation~$\{N_s\}$.  Assume that that there exists a foliation of the future of~$H$ with $L_s$ congruences, which are simultaneously reducible with the $\{N_{s}\}$ leaves.  Let $X$ be a boundary-anchored, codimension-two spacelike extremal surface such that:
\begin{enumerate}
	\item $X$ is reducible to~(2+1) dimensions simultaneously with $\{N_{s}\}$ and $\{L_{s}\}$;
	\item $\partial X$ is connected; and
	\item $X$ intersects~$\Ext(H)$ only on regions with $\theta(\{N_s\}) > 0$.
\end{enumerate}
Assume further that there exists an~$H$-deformation of~$X$ that obeys the above conditions as well. Then~$X$ cannot have a pivot point in~$\Int(H)$.
\end{thm}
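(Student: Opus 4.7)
My plan is to argue by contradiction: assume $X$ has a pivot point in $\Int(H)$ and use the $H$-deformation to push this pivot out to the screen $H$ itself, where the resulting codimension-two tangency with a leaflet will yield a geometric inconsistency.

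By the first regularity condition $\theta_k$ immediately decreases away from $H$ into $\Int(H)$, so $\theta(\{N_s\}) < 0$ in a collar of $H$ inside $\Int(H)$; condition 3 then extends this to all of $X \cap \Int(H)$. The reducible version of Lemma~\ref{lem:NARWHAL} turns the pivot point $p^0$ into a genuine turning point, with $X$ locally nowhere to the past of $N_{p^0}$. Now let $\{X_\lambda\}_{\lambda \in [0,1]}$ be the $H$-deformation, so $X_0 = X$, $X_1 \subset \Ext(H)$, and each $X_\lambda$ is boundary-anchored, extremal, connected on the boundary, reducible, and satisfies conditions 1--3. Set
\[
\lambda^* \;=\; \sup\bigl\{\lambda \in [0,1] : X_\lambda \cap \Int(H) \neq \varnothing\bigr\}.
\]

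I would next argue that the pivot point $p_\lambda$ of $X_\lambda$ in $\Int(H)$ can be continuously tracked to some limit $p^* \in H$ as $\lambda \to \lambda^{*-}$: Theorem~\ref{thm:traffic} rules out the emergence of inflection points under the reducible deformation, pivots in $\Int(H)$ are forced by Lemma~\ref{lem:NARWHAL} to be local minima of the leaf-parameter function $f_\lambda: X_\lambda \to \mathbb{R}$ sending $q \in N_s$ to $s$, and two local minima cannot annihilate without an intervening saddle by Morse theory---so the pivot must escape through $H$ rather than disappear within $\Int(H)$. The connected-boundary hypothesis forbids the pivot from being traded between disconnected pieces of $X_\lambda$ during this escape. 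By continuity of the deformation, $X_{\lambda^*}$ touches $H$ tangentially from $\Ext(H)$ at $p^* \in \sigma_{s^*}$, giving $T_{p^*} X_{\lambda^*} \subset T_{p^*} H$; combined with $T_{p^*} X_{\lambda^*} \subset T_{p^*} N_{s^*}$ inherited from the pivot condition in the limit, and a dimension count, I obtain
\[
T_{p^*} X_{\lambda^*} \;=\; T_{p^*} N_{s^*} \cap T_{p^*} H \;=\; T_{p^*} \sigma_{s^*},
\]
a codimension-two tangency of $X_{\lambda^*}$ with the leaflet.

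In particular $X_{\lambda^*}$ is automatically tangent to $L_{s^*}$ at $p^*$ (since $T_{p^*} \sigma_{s^*} \subset T_{p^*} L_{s^*}$), where $\theta_\ell|_{p^*} < 0$. The reducible Lemma~\ref{lem:NARWHAL} forces $X_{\lambda^*}$ to lie nowhere to the past of $L_{s^*}$ near $p^*$, while a parallel perturbation using the strict inequality $k^a \nabla_a \theta_k < 0$ from the regularity definition places $X_{\lambda^*}$ nowhere to the past of $N_{s^*}$ in a neighborhood of $p^*$ as well. Locally $X_{\lambda^*}$ is therefore trapped in $J^+(\sigma_{s^*}) \cap \overline{\Ext(H)}$, and the connected-boundary hypothesis upgrades this to a global obstruction---$X_{\lambda^*}$ cannot extend to $\partial M$ while remaining in this region---contradicting boundary-anchoring.

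The chief obstacle I expect is precisely this last step, promoting the local confinement near $p^*$ to a global contradiction, particularly in the indefinite-signature case where $H$ has timelike portions (cf.~Figure~\ref{fig:prooffail}) and the pivot could a priori try to escape $\Int(H)$ through a timelike piece of $H$ instead of reaching a leaflet tangentially. Ruling this out appears to require a careful parallel analysis of the $\{L_s\}$-pivot structure of $X_\lambda$ along the deformation, and the simultaneous reducibility of both foliations with $X$ is presumably what enables this complementary argument---explaining both why the $\{L_s\}$ congruences are required to foliate the full future of $H$ (not merely the umbral region as in Theorem~\ref{thm:main}) and why the $H$-deformability hypothesis is needed rather than a bare connectedness condition.
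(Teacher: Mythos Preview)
Your overall strategy---deform through the $H$-deformable family until the pivot reaches $H$, then extract a contradiction from the resulting tangency with a leaflet---is the same as the paper's. But there is a genuine gap in how you set up the critical surface, and it causes you to miss an entire case.

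Your critical parameter $\lambda^*$ is defined by when $X_\lambda$ last intersects $\Int(H)$, and you then assert that the tracked pivot $p_\lambda$ limits to a point $p^*$ where $X_{\lambda^*}$ is \emph{tangent to $H$}. This inference is not justified: the pivot condition is tangency to a leaf $N_s$, not to $H$, and there is no reason these coincide at the critical moment. The paper instead takes $X_m$ to be the last surface in the family with a pivot in $\Int(H)\cup H$; a short continuity argument forces $p_m\in H$, but crucially $X_m$ need \emph{not} be tangent to $H$ there. This bifurcates the proof into two cases, and you have only treated (a version of) one of them.

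The case you omit---$X_m$ tangent to $N_m$ at $p_m\in H$ but crossing $H$ transversally---is exactly where the connectedness of $\partial X$ does its work. The leaflet $\sigma_m$ then splits $X_m$ into two pieces $X_m^+$ and $X_m^-$; one is forced by the $\{L_s\}$ foliation of $J^+(H)$ and Lemma~\ref{lem:NARWHAL} to reach $\partial M$, and the other by the $\{N_s\}$ foliation and condition~3, so $\partial X_m$ is disconnected. Your invocation of connectedness (``forbids the pivot from being traded between disconnected pieces'') is not this mechanism. Even in the tangent case your final step is too quick: when $J^+(\sigma_m)$ meets $H$ again the local confinement does not directly contradict boundary-anchoring, and the paper needs a further small perturbation $X_{m-\epsilon}$ (pushing $p_m$ just outside $H$ and producing a forbidden second pivot $q$) to close the argument.
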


Note that condition~(2) rules out geodesics, so this theorem is only nontrivial in~$d > 2$.  Also, condition~(3) is meant to exclude possible pathological behavior from other holographic screens somewhere else in the spacetime.

For a detailed proof of this theorem, see Appendix~\ref{subapp:mainconnected}.  For a rough picture, consider the case where an extremal surface~$X$ is not tangent to an~$L_s$ leaf at its turning point, as in Figure~\ref{fig:prooffail}.  In such a case,~$X$ must be anchored on the boundary at two places, \textit{i.e.}~$\partial X$ is disconnected.  The case where~$X$ is tangent to an~$L_s$ leaf at its turning point is more subtle and requires invoking~$H$-deformability; we leave the details to the Appendix.

It is worth making some remarks about potential pitfalls in higher dimensional spacetimes in which the extremal surfaces and/or null foliations are not reducible.  As noted above, Lemma~\ref{lem:NARWHAL} will then generally be false, and cannot be used to rule out inflection points.  We suspect it should be possible to use only Lemma~\ref{lem:aron} to prove weaker versions of Theorems~\ref{thm:main} and~\ref{thm:mainconnected} that do not exclude inflection points.  However, such constraints have minimal relevance for hole-ography.

We should also note that while our proofs do not hold in non-reducible settings, we can think of no counterexamples to the statements of the theorems.  It is possible that they hold in more generality, but if that is the case, they would need to be proven using a different approach than that taken here.

%=================================== Examples =======================================

\section{Examples}
\label{sec:examples}

Here we present examples illustrating the application and consequences of the theorems discussed in the previous section.

\subsection{dS and AdS Spacetimes}

\begin{figure}[t]
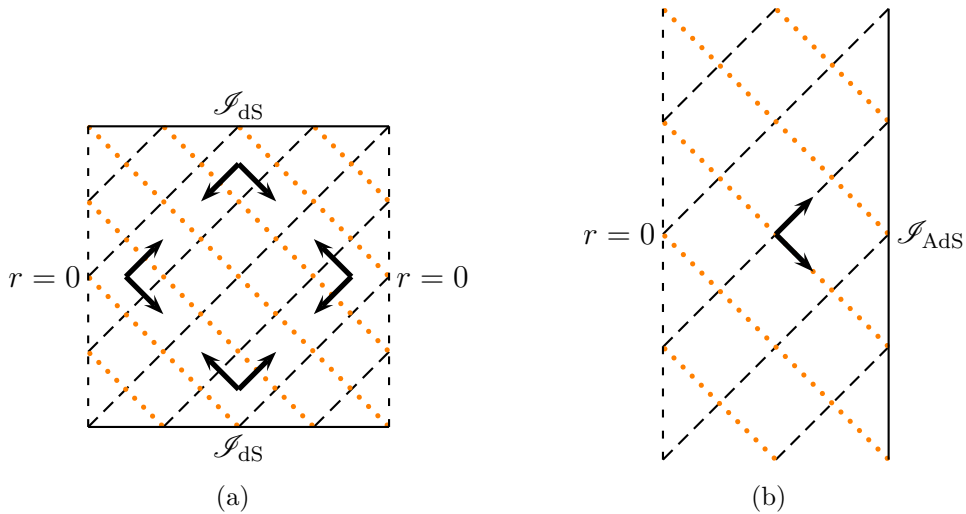

\centering
\subfigure[]{
\includegraphics[page=15]{Figures-pics}
\label{subfig:dS}
}
\hspace{1cm}
\subfigure[]{
\includegraphics[page=16]{Figures-pics}
\label{subfig:AdS}
}
\caption{The conformal diagrams of de Sitter \subref{subfig:dS} and anti-de Sitter \subref{subfig:AdS} space.  Each point on these diagrams corresponds to a suppressed sphere~$S^{d-1}$ whose area is parametrized by a radial coordinate~$r$.  The null foliations shown are generated by light rays fired from~$r = 0$, \textit{i.e.}~the north and south poles of dS and the origin of AdS.  The black arrows indicate the directions in which extremal surface are allowed to turn around (\textit{e.g.}~an arrow pointing down and to the right indicates that extremal surfaces may only be tangent to the dashed foliation from the past).  They imply that extremal surfaces must bend away from~$\I_\mathrm{dS}$, but towards~$\I_\mathrm{AdS}$.}
\label{figs:dSAdS}
\end{figure}

As an example of Theorem~\ref{thm:traffic} (which states that connected components of extremal surfaces can have no more than one turning point in a region of constant~$\theta(\{N_s\})$), consider the simplest cases of pure de Sitter (dS) or anti-de Sitter (AdS) spacetimes, whose conformal diagrams are shown in Figure~\ref{figs:dSAdS} (the analysis of Minkowski space is similar to that of AdS, so we will not discuss it separately).  Both dS and AdS have a spherical isometry to which we have adapted the conformal diagrams; we introduce a coordinate~$r$ which parametrizes the areas of the spheres of symmetry\footnote{Specifically,~$r$ is the usual radial coordinate that appears in the slicing
\be
ds^2 = -\left(1 \pm r^2/\ell^2\right) dt^2 + \frac{dr^2}{1 \pm r^2/\ell^2} + r^2 d\Omega^2_{d-1},
\ee
with the positive (negative) sign for the global (static) slicing of AdS (dS).}.

In each spacetime we introduce two null foliations which we take to be adapted to its spherical isometry: these foliations are generated by light cones fired from~$r = 0$ towards the boundary~$r = \infty$.  It is then easy to use Theorem~\ref{thm:traffic} to understand how extremal surfaces must behave.  The cross-sectional area of the null foliations increases with~$r$, so the expansion along each foliation is positive in the direction of increasing~$r$.  It then follows that the expansion along each sheet of the foliations never changes sign.  This allows us to draw on Figure~\ref{figs:dSAdS} the directions in which extremal surfaces are allowed to turn with respect to these foliations.  In particular, note that extremal surfaces in AdS must bend \textit{towards} the conformal boundary~$\I_\mathrm{AdS}$, while extremal surfaces in dS sufficiently near the boundary~$\I_\mathrm{dS}$ must bend \textit{away} from it.

In principle, these claims only constrain the behavior of extremal surfaces with respect to the two null foliations introduced here.  However, the high degree of symmetry of both dS and AdS allows us to conclude that \textit{all} extremal surfaces in AdS must be attracted to~$\I_\mathrm{AdS}$, while \textit{all} extremal surfaces in dS must be repelled from~$\I_\mathrm{dS}$.  The former point is, of course, well-known: extremal surfaces anchored to the boundary of AdS come up frequently in holographic contexts, and necessarily bend towards the boundary.  The latter point was made generally in~\cite{Fischetti:2014uxa} using similar considerations to the ones used here.  In particular, it follows that no boundary-anchored extremal surfaces exist in dS, since they would necessarily need to bend towards the boundary.

\subsection{AdS Spherical Collapse}

The above simple examples of dS and AdS illustrate how Theorem~\ref{thm:traffic} puts constraints on the general behavior of extremal surfaces in arbitrary spacetimes, even those not containing splitting holographic screens.  Our focus, however, is on applications to AdS/CFT and bulk reconstruction.  To that end, let us now discuss how Theorem~\ref{thm:main} (which states that bounday-anchored extremal surfaces in the interior of achronal screens cannot have pivot points) explains some of the observations of~\cite{Liu:2013iza,Liu:2013qca, Hubeny:2013dea} in the context of null collapse in AdS.

To briefly review, consider the formation of a black hole in Poincar\'e AdS by infalling null dust.  In the holographic context, this process is dual to the thermalization of the boundary field theory following a perturbation (typically a form of a quantum quench).  The bulk solution consists of two pieces: to the past of the null dust, the solution is a vacuum solution and therefore just (the Poincar\'e patch of) pure AdS.  The portion of the bulk containing the dust and to the future of it is AdS-Vaidya:
\be
ds^2 = -f(r,v) dv^2 + 2 \, dv \, dr + \frac{r^2}{\ell^2} \, d\vec{x}_{d-1}^2,
\ee
where
\be
f(r,v) = \frac{r^2}{\ell^2}\left(1 - \frac{\mu(v)}{r^d}\right),
\ee
$d$ is the boundary spacetime dimension, and we can think of compactifying the planar directions~$\vec{x}$ into a torus (it is the planar symmetry of these directions that allows us to apply Theorem~\ref{thm:main} here).  Here the mass function~$\mu(v)$ characterizes the profile of the dust; the null energy condition is satisfied when~$\mu'(v) \geq 0$.  The full solution is shown in Figure~\ref{subfig:AdSVaidyathick}.

\begin{figure}[t]
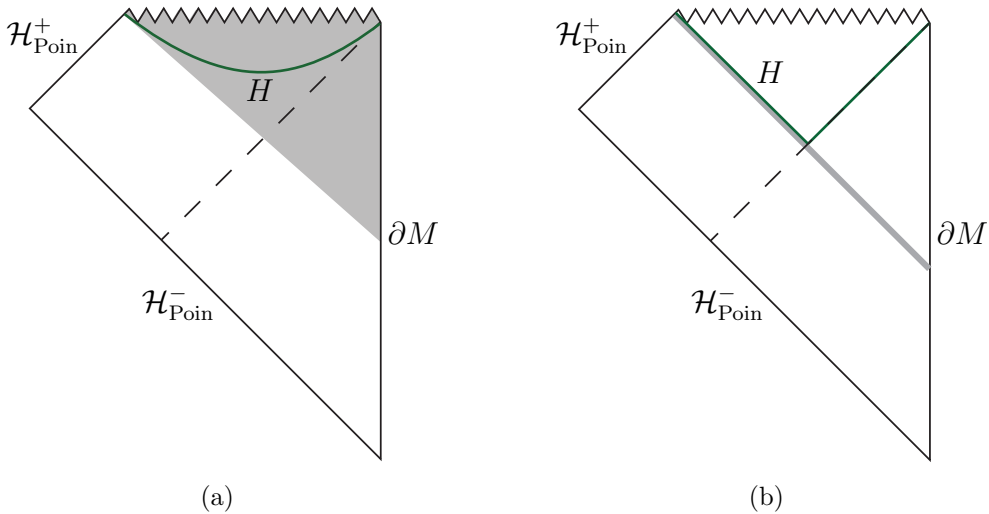

\centering
\subfigure[]{
\includegraphics[page=17]{Figures-pics}
\label{subfig:AdSVaidyathick}
}
\hspace{1cm}
\subfigure[]{
\includegraphics[page=18]{Figures-pics}
\label{subfig:AdSVaidyathin}
}
\caption{The formation of a black hole in pure AdS by infalling null dust (the shaded gray regions).  To the past of the dust, the solution is pure AdS; the portion of the spacetime containing the dust is AdS-Vaidya.  \subref{subfig:AdSVaidyathick}: for continuously infalling null dust, the spacetime contains an achronal future holographic screen (solid green curve).  \subref{subfig:AdSVaidyathin}: if the dust is taken to be a thin shell, the screen approaches two null pieces, with one lying on the event horizon and the other on the shell.  However, if the shell has an arbitrarily small but nonzero thickness, and if an arbitrarily small but nonzero amount of matter continues to fall in after the shell, the screen will be achronal (and arbitrarily close to being null).  Note that the null boundaries are actually the Poincar\'e horizons~$\mathcal{H}^\pm_\mathrm{Poin}$.}
\label{fig:nullcollapse}
\end{figure}

Let us now consider the plane symmetric foliation of this spacetime generated by light cones fired from~$r = 0$.  The cross-sectional areas of these sheets go like~$A \propto r^{d-1}$, so the expansion is positive in the direction of increasing~$r$.  In particular, this means that the expansion of the right-moving null sheets to the future of the event horizon changes sign, giving rise to a future holographic screen.  This screen coincides with the dynamical horizon at~$f(r,v) = 0$.

In the context of holographic quantum quenches,~\cite{Liu:2013iza,Liu:2013qca} considered such a collapse scenario with the infalling null matter taken to be a thin shell.  The resulting holographic screen technically violates the assumptions of our theorems, since it is null and therefore not regular.  However, it is easy to consider a regulated solution in which the null shell is smeared out slightly and given a rapidly decaying tail all the way into the far future.  The screen will then be slightly deformed into a regular achronal screen, as illustrated in Figure~\ref{subfig:AdSVaidyathin}.  Then our theorems can be applied to the regulated collapse geometry.  By taking the limit where the regulator goes to zero, we may expect our theorems to apply to the thin shell solution as well.

Theorem~\ref{thm:main} asserts that any reducible boundary-anchored extremal surface cannot turn around inside the screen.  This is precisely what~\cite{Liu:2013iza,Liu:2013qca} found, as shown in Figure~\ref{fig:LiuSuh}.  Extremal surfaces anchored to strips on the boundary sometimes penetrate the screen, but the turning point never does.  In particular, as the boundary strips are taken to later times, the turning point of the corresponding extremal surfaces tracks out a curve which never enters the screen.  Thus the interesting behavior of the turning point shown in Figure~\ref{fig:LiuSuh} is simply a consequence of our theorem.

\begin{figure}[t]
\centering
\includegraphics[page=19]{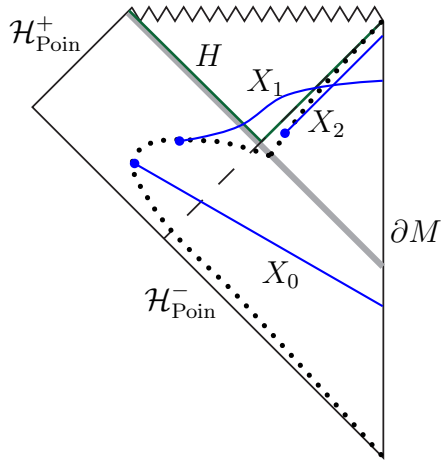}
\caption{A family of extremal surfaces (solid blue lines) anchored to the boundary of AdS in thin shell Vaidya-AdS, as found in~\cite{Liu:2013iza,Liu:2013qca}.  The dot at the end of each surface indicates the location of its turning point; the dotted black line follows the path of this turning point as time at which the surface is anchored to~$\partial M$ is varied.  Note that some surfaces in this family do enter the holographic screen~$H$, but the turning points never do.}
\label{fig:LiuSuh}
\end{figure}

Ref.~\cite{Hubeny:2013dea} considered a similar problem, but in global AdS.  In that case, the generalized planar symmetry is (a subset of) the full spherical symmetry. They too found extremal surfaces anchored to spherical boundary regions that penetrated the holographic screen, but never any that turned around in it.

%================================== Discussion ======================================

\section{Ramifications for Hole-ography}
\label{sec:discussion}

The key questions of hole-ography are: does there exist an object in the CFT which is dual to the area of an arbitrary spacelike codimension-two bulk surface?  If so, what are the limitations of this duality?  The former question has been addressed in~\cite{Balasubramanian:2013rqa,Balasubramanian:2013lsa,Czech:2014wka,HeaMye14,Myers:2014jia}; in this paper, we have proven theorems that give a partial answer to the latter.

\subsection{Incomplete Reconstruction Inside Screens}

Recall that~\cite{HeaMye14} showed that under an appropriate set of assumptions (including generalized planar symmetry), if a given bulk spacelike codimension-two surface~$\gamma$ can be reconstructed from boundary-anchored extremal surfaces tangent to it, then the area of~$\gamma$ is given by the differential entropy of the boundary regions selected by the extremal surfaces.  This direction was referred to as the ``bulk-to-boundary'' direction.  Conversely, given a set of intervals on the boundary, the extremal surfaces anchored to them can be used to define at least one bulk surface~$\gamma$ whose area is equal to the differential entropy of the intervals; this is the ``boundary-to-bulk'' direction.
  
We  pause here to note  an important subtlety: to get a good correspondence between the area of~$\gamma$ and the CFT differential entropy, the extremal surfaces must be the minimal-area ones that are picked out by the HRT formula (since there may exist more than one surface with the same boundary conditions).  More generally, if the extremal surfaces used to reconstruct~$\gamma$ are not the minimal-area ones, they may be related to other CFT quantities such as entwinement~\cite{Balasubramanian:2014sra} (for example, minimal surfaces alone cannot be used to reconstruct the AdS$_3$ conical defect geometry or BTZ~\cite{Czech:2014ppa}).  Here we will show that surfaces inside holographic screens cannot be fully reconstructed from \textit{any} boundary-anchored extremal surfaces, be they minimal-area or not.

\begin{figure}[t]
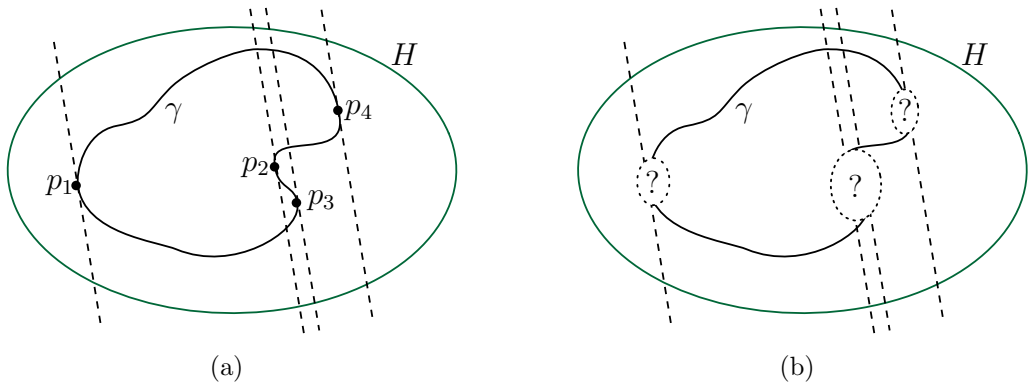

\centering
\subfigure[]{
\includegraphics[page=20]{Figures-pics}
\label{subfig:gammainH}
}
\hspace{1cm}
\subfigure[]{
\includegraphics[page=21]{Figures-pics}
\label{subfig:gammainHreconstruct}
}
\caption{The plane of the page is a time slice containing a spacelike curve~$\gamma$ (solid black line) in the interior of a holographic screen~$H$; the green oval shows the intersection of~$H$ with this particular time slice.  \subref{subfig:gammainH}: the curve~$\gamma$ will always be tangent to at least two leaves of the foliation (dotted black lines); in this particular case, it is tangent to four of them at the marked points.  \subref{subfig:gammainHreconstruct}: by our theorem,  portions of~$\gamma$ in a neighborhood of these points cannot be reconstructed from boundary-anchored extremal surfaces.}
\label{fig:curveinH}
\end{figure}

For example, consider the consequences of our results for a bulk-to-boundary construction: let~$\gamma$ be a sufficiently smooth spacelike closed curve\footnote{Here we will restrict the discussion to three bulk dimensions (so~$\gamma$ is just a curve), though our statements also hold in reducible setups.} that lies entirely in the interior of some regular holographic screen~$H$.  Since~$\gamma$ is smooth, there must be some points at which~$\gamma$ is tangent to leaves of the null foliation used to construct~$H$.  We have illustrated this in Figure~\ref{subfig:gammainH}, where we have shown a spatial slice containing~$\gamma$ and its intersection with~$H$ and some leaves of the null foliation.  Theorem~\ref{thm:main} implies that there cannot exist boundary-anchored geodesics tangent to~$\gamma$ at the marked points (any extremal surfaces tangent to~$\gamma$ there must \textit{e.g.}~end at a singularity).  Moreover, if we slightly deform the null foliation, these points will shift slightly along~$\gamma$, so we find that there are open regions of~$\gamma$ to which no boundary-anchored geodesics are tangent.

This is our main result:~$\gamma$ cannot be entirely reconstructed from any set of boundary-anchored geodesics, minimal or not.  Generically, however, there will be regions of~$\gamma$ that \textit{can} be.  Thus in this bulk-to-boundary approach,~$\gamma$ can only be \textit{partially} reconstructed from boundary-anchored geodesics (and therefore in principle from CFT observables dual to them).  This is a form of coarse-graining: the boundary data dual to geodesics simply do not know how to reconstruct some pieces of~$\gamma$.  This coarse-grained reconstruction is illustrated in Figure~\ref{subfig:gammainHreconstruct}.

Recall, however, that the boundary-to-bulk approach of~\cite{HeaMye14} is slightly different: in order to reconstruct (the area of) a bulk curve~$\gamma$ from a set of boundary intervals, we do not need the corresponding geodesics to be tangent to~$\gamma$.  Rather, we only require what~\cite{HeaMye14} call the ``null alignment condition'': where a geodesics meets~$\gamma$, their tangent vectors need not agree, but may simply span a null plane.  This is a weaker constraint, and it is therefore natural to wonder if the boundary-to-bulk construction fares any better in this case.

The answer is no.  Suppose a smooth bulk curve~$\gamma$ constructed via the boundary-to-bulk approach is contained entirely inside~$H$.  Consider the two null planes generated by congruences fired off of~$\gamma$ in its four orthogonal (past and future) null directions.  The null alignment condition says that~$\gamma$ may be constructed from boundary-anchored geodesics that intersect~$\gamma$ and are tangent to one of these planes when they do.  But since~$\gamma$ is smooth, by the same argument given above there must exist some points at which this null plane is tangent to a leaf of the foliation.  Then proceeding as we did in the bulk-to-boundary construction, we conclude that~$\gamma$ must contain segments that cannot be constructed from boundary-anchored geodesics.

The conclusion is that whether one takes the bulk-to-boundary or boundary-to-bulk approach, it is not possible to reconstruct an entire smooth\footnote{There may still be reconstruction issues even if~$\gamma$ has cusps, but we will not consider this case here.} curve~$\gamma$ contained inside a holographic screen from boundary-anchored geodesics. In fact, it is very plausible that there are curves of which only an arbitrarily small portion can be reconstructed.

Of course, there is nothing preventing either approach from reconstructing a bulk curve that is only partly contained inside the holographic screen.  However, a promising approach of hole-ography was to be able to reconstruct the bulk geometry itself via the integral geometry approach of~\cite{Czech:2014ppa,Czech:2015qta}.  In order to use this approach to reconstruct the spacetime inside a holographic screen, one would need to reconstruct arbitrary curves entirely contained within it.  It would thus appear that this approach to reconstructing the interior of a holographic screen will not succeed.

\subsection{Quantum Effects}

A possible objection to our conclusion is the following: why not use boundary-anchored extremal surfaces to reconstruct the geometry of a portion of a Cauchy slice~$\Sigma$ to the past of the screen, and then use the bulk equations of motion to evolve forward from~$\Sigma$ to reconstruct its entire causal development?  In particular, this may include the interior of a holographic screen; such an example is shown in Figure~\ref{fig:EOMreconstruction}.

\begin{figure}[t]
\centering
\includegraphics[page=22]{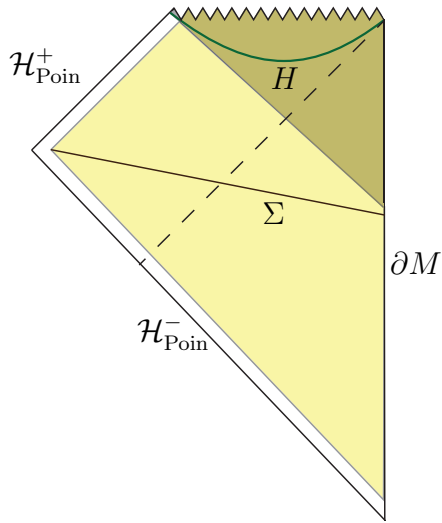}
\caption{Attempting to reconstruct the interior of a holographic screen by evolving forward from an initial time slice~$\Sigma$.  The shaded yellow region shows the domain of dependence~$D(\Sigma)$ of~$\Sigma$; in principle, if we knew the equations of motion everywhere, we could reconstruct this entire domain just from data on~$\Sigma$.  In particular, this can include the interior of a holographic screen~$H$ (green).}
\label{fig:EOMreconstruction}
\end{figure}

In principle this is possible, but only if we know the equations of motion \textit{a priori}.  However, there is a sense in which a ``full'' bulk reconstruction should reconstruct the equations of motion as well as the geometry \textit{ab initio}.  This is especially relevant given that the interiors of holographic screens tend to contain singularities, that is, regions where quantum gravitational effects become important.  As soon as quantum fluctuations are introduced into the metric, even perturbatively, the possibility of reconstructing the bulk from its equations of motion is lost, particularly in near-singularity regions.  For this reason, we find it more natural to seek a way of reconstructing the bulk \textit{directly} from CFT data, without recourse to any equations of motion.

While our work has hithero been entirely classical, the appearance of quantum effects motivates the following observations:
\begin{itemize}
\item  Recall that the interior of a holographic screen has a holographic interpretation in terms of bulk entropy via the Bousso bound~\cite{CEB1, CEB2}.  The area of a leaflet of a future or past holographic screen gives a bound on the entropy of the leaf generating it:
\be
S(N_{s}) \leq \frac{\mathrm{Area}(\sigma_{s})}{2G_{N}\hbar}.
\ee
This raises an interesting question: can the holographic screen itself be reconstructed from boundary observables?  More precisely, what is the CFT dual of a holographic screen, and how is it linked to bulk entropy? As discussed in the previous subsection, the holographic screen is an obstacle to complete hole-ographic reconstruction of its interior; perhaps the information that is lost in the ``coarse-graining'' discussed above is stored in extra degrees of freedom associated with the screen (similar to \textit{e.g.}~the superselection sectors of~\cite{MarWal12}). 
\item Even in the presence of quantum effects, the option of direct reconstruction from boundary observables remains: for a semiclassical bulk (\textit{i.e.}~working to first order in $G_{N}\hbar/\Lambda^{d-1}$ where $\Lambda$ is a characteristic length scale of the quantum fields in the theory),~\cite{FauLew13} found that the \textit{generalized entropy} of extremal surfaces yields the dual CFT entanglement entropy.  More precisely, the generalized entropy of a spacelike codimension-two surface $X$ is given by~\cite{Bek73}:
\be
S_{\mathrm{gen}} (X)= \frac{\mathrm{Area}(X)}{4G_{N}\hbar} + S_{\mathrm{ent}} + \mathrm{counterterms},
\ee
where $S_{\mathrm{ent}}$ is the von Neumann entropy of the exterior of $X$ on some Cauchy surface.  It was later conjectured by~\cite{EngWal14} that, at any finite order in perturbation theory in $G_{N}\hbar/\Lambda^{d-1}$ in the bulk,  there exists a quantum analogue of a classical extremal surface, obtained by replacing the area by the generalized entropy in the extremization procedure.  The quantum extremal surface is obtained by extremizing $S_{\mathrm{gen}}$ with respect to variations along a null surface fired from $X$. The entanglement entropy of the boundary region enclosed by $\partial X$ is conjectured to be dual to the generalized entropy of $X$~\cite{EngWal14}. The extension of the hole-ographic construction to semiclassical and perturbatively quantum gravity has not been discussed, as it is yet to be well-understood even at the classical level.  However, it is very tempting to hope that a similar construction can be made using quantum extremal surfaces.
\item Since quantum fields may violate the null energy condition (which was assumed for all of the proofs in this paper), it may \textit{prima facie} appear that our results are applicable exclusively to the classical case, where reconstruction may be undertaken via the bulk equations of motion. However, it should be possible to prove similar statements about bulk reconstruction from ``quantum hole-ography'' by: (1) replacing all surfaces with their quantum analogues; (2) relinquishing the null energy condition in favor of the recent quantum focussing conjecture of~\cite{BouFis15}, which asserts that the variation of the generalized entropy (rather than the area) is nonincreasing, or equivalently the second variation is nonpositive; and (3) imposing an analogous generic condition to be introduced in~\cite{BouEngToAppear}. In other words, quantum extremal surfaces cannot be used to reconstruct surfaces in spacetime regions foliated by leaves with decreasing generalized entropy. 
\end{itemize}

\acknowledgments

It is a pleasure to thank Raphael Bousso, Dalit Engelhardt, William Kelly, Matthew Headrick, Gary Horowitz, Don Marolf, Aron Wall, and Jason Wien for helpful discussions.  NE is supported by the National Science Foundation Graduate Research Fellowship under Grant No. DGE-1144085.  SF is supported by the National Science Foundation under grant number PHY12-05500 and by funds from the University of California.

\appendix

\section{Proofs}
\label{app:proofs}

In this Appendix, we prove the theorems stated in the main text and provide some more technical details.

\subsection*{Theorem~\ref{thm:traffic}}
\label{subapp:traffic}

\textit{Theorem 1.}
Let $R$ be a region such that~$\theta(\{N_s\})$ has a definite sign everywhere in~$R$, and let~$X$ be a (codimension-two) extremal surface.  Then any connected portion of $X$ in~$R$ can turn around at most once, and has no inflection points if~$M$ is~(2+1)-dimensional.  In particular, if~$H$ is a regular splitting future holographic screen, any connected portion of~$X$ in~$\Int(H)$ can turn around at most once.

\begin{proof}
First, let us index the leaves of the foliation~$\{N_s\}$ by a parameter $s$ which runs to the future along the foliation, \textit{i.e.}~$N_s$ is nowhere to the past of $N_{s'}$ if and only if $s>s'$, which we will also denote by $N_{s}>N_{s'}$.

Now, we prove the theorem by contradiction. Let $\theta(\{N_{s}\})<0$ everywhere in $R$; the opposite case proceeds analogously. Let $X_{R}$ be a connected component of $X$ in $R$, and suppose $X_{R}$ has a pivot point at $p\in R$. Let $N_{s(p)}$ be the leaf containing $p$. By Lemma~\ref{lem:NARWHAL}, $p$ cannot be an inflection point if~$M$ is~(2+1)-dimensional.

Now suppose~$p$ is a turning point.  Suppose also that $X_{R}$ has another turning point $q\in R$; \textit{i.e.}~$X_{R}$ is tangent to another leaf $N_{s(q)}$, and it must be tangent to it either from the past or from the future. By Lemma~\ref{lem:aron}, $X_{R}$ must be tangent to $N_{s(q)}$ from the future. This immediately requires $X_{R}$ to have another turning point $r$ such that $N_{s(r)}>N_{\text{max}(s(p),s(q))}$, and $X_{R}$ must be tangent to $N_{s(r)}$ from the past at $r$, as shown in Figure~\ref{fig:traffic}. But if $r\in R$, then by construction $\theta(N_{s(r)})<0$, and $X_{R}$ cannot be tangent to $N_{s(r)}$ from the past. Therefore $r \notin R$, and $X_{R} \nsubseteq R$, in contradiction with the definition of $X_{R}$.

To prove the last statement of the theorem, we simply note that by construction,~$\theta(\{N_s\})$ has the same sign everywhere in~$\Int(H)$.
\end{proof}

\subsection*{Theorem~\ref{thm:main}}
\label{subapp:main}

\textit{Theorem 2.} Let $H$ be a regular splitting future holographic screen in a~(2+1)-dimensional asymptotically locally AdS spacetime~$M$.  Then no boundary-anchored extremal surface can have a pivot point in~$U(H)$.  In particular, if $H$ is achronal, no such extremal surfaces can have a pivot point in~$\Int(H)$.

\begin{proof}
By contradiction.  Let $X$ be a boundary-anchored extremal surface which has a pivot point $p$ in $U(H)$. By Theorem~\ref{thm:traffic}, $X$ has a turning point at $p$. Let $L_{m'}$ and $N_{m}$ be the leaves of their respective foliations containing $p$. There are two cases:~(1) $X$ is tangent to a leaf $L_{m'}$, or~(2) $X$ is not tangent to $L_{m'}$. We consider the two cases separately.

\textbf{Case 1}: By Lemma~\ref{lem:NARWHAL}, $\Ocal_{p}\cap X\subset J^{+}(L_{m'})$.  We also have that~$\Ocal_{p} \cap X \subset J^+(N_m)$, and thus since $L_{m'}$ and $N_m$ are boundaries of the future of $\sigma_{mm'}$, $\Ocal_{p} \cap X\subset J^{+}(\sigma_{mm'})$.  By the definition of the umbral region,  $J^{+}(\sigma_{mm'})$ has no intersection with~$\Ext(H)$, implying that it is foliated by $N_{s}$ leaves with negative expansion.  Therefore $I^{+}(\sigma_{mm'})\cap X$ can never be tangent to an $N_{s}$ or an $L_{s'}$, since if it were, it would be tangent to them from the past (see the proof of Theorem~\ref{thm:traffic}), in violation of Lemma~\ref{lem:aron}. Thus $X \subset J^{+}(\sigma_{mm'})$. But because trapped and marginally-trapped surfaces always lie to the future of the future event horizon, $\partial I^{-}(\partial M)$, $J^{+}(\sigma_{mm'}) \cap\partial M=\varnothing$. This would imply that $X$ cannot be boundary-anchored, in contradiction with its definition.

\textbf{Case 2}: Since $X$ intersects $L_{m'}$ at $p$ and is not tangent to it there, there exists a small neighborhood of $p$ on which $X$ intersects both $I^{+}(L_{m'})$ and $I^{-}(L_{m'})$.  This immediately implies that there is a small open subset of $X$ which lies in $J^{+}(\sigma_{mm'})$. By assumption, $\partial X\subset \partial M$, and so $\partial X\cap J^{+}(\sigma_{mm'})=\varnothing$. Therefore, there must exist some point~$q \in U(H)$ at which~$X$ is tangent from the past to either an $L_{s'}$ or $N_{s}$ leaf (for if there were not,~$X$ would need to have a boundary in~$J^+(\sigma_{mm'})$).  But in~$\Int(H)$, the expansions of all of the $L_s$ and the $N_{s}$ leaves are negative, so an extremal surface can only be tangent to them from the future.  We have therefore arrived at a contradiction.
\end{proof}

\subsection*{Theorem~\ref{thm:mainconnected}}
\label{subapp:mainconnected}

\begin{defn}
\textit{$H$-deformability}.  Let $\{X_{\alpha}\}$ be a family of boundary-anchored extremal surfaces such that every surface in $\{X_{\alpha}\}$ can be continuously deformed via other surfaces in $\{X_{\alpha}\}$ to some initial surface $X_{0}$ that lies entirely in~$\Ext(H)$. Then every surface in $\{X_{\alpha}\}$ is said to be \textit{$H$-deformable}, and $\{X_{\alpha}\}$ is an \textit{$H$-deformable family}. 
\end{defn}

\noindent \textit{Theorem 3.}
Let~$M$ be an asymptotically locally AdS spacetime, and let $H$ be a regular splitting future holographic screen constructed from a reducible foliation~$\{N_s\}$.  Assume that that there exists a foliation of the future of~$H$ with $L_s$ congruences, which are simultaneously reducible with the $\{N_{s}\}$ leaves.  Let $X$ be a boundary-anchored, codimension-two spacelike extremal surface such that:
\begin{enumerate}
	\item $X$ is reducible to~(2+1) dimensions simultaneously with $\{N_{s}\}$ and $\{L_{s}\}$;
	\item $\partial X$ is connected; and
	\item $X$ intersects~$\Ext(H)$ only on regions with $\theta(\{N_s\}) > 0$.
\end{enumerate}
Assume further that there exists an~$H$-deformation of~$X$ that obeys the above conditions as well. Then~$X$ cannot have a pivot point in~$\Int(H)$.

\begin{proof}
By contradiction.  Let $X_1$ be an $H$-deformable extremal surface with connected~$\partial X_1$.  Assume that~$X_1$ has at least one pivot point $p_1$ in~$\Int(H)$, and let $N_1$ be the leaf containing $p_1$. Consider a deformation parametrized by~$\alpha$ along an $H$-deformable family $\{X_{\alpha}\}$ to a surface $X_0\in \mathrm{Ext}(H)$. Note that $X_0$ exists by definition of $H$-deformability.

Next, let $X_m$ be the last surface in the deformation which is tangent to a leaf $N_m$ at some point $p_m \in \mathrm{Int}(H)\cup H$ (it may be the case that~$X_m = X_0$).  Because the expansion is negative on all leaves inside $H$, if $p_m\in \Int(H)$ we have by Theorem~\ref{thm:traffic} that~$p_m$ is a turning point.  Then evolving backward along the deformation from $X_1$ to $X_0$, there exists some $\epsilon>0$ such that the surface $X_{m-\epsilon}$ must also be tangent to a leaf in Int$(H)$ (since~$\Int(H)$ is open). But this contradicts the definition of $X_m$. So $p_m \notin \Int(H)$, and thus $p_m\in H$. Then there are two cases:~(1) $X_m$ is tangent to $H$ at $p_m$, or~(2) $X_m$ is not tangent to $H$ at $p_m$. We consider the two cases separately.

\textbf{Case 1}: If $X_m$ is tangent to $H$ at $p_m$, then $X_m$ is also tangent to a leaflet $\sigma_m$, and therefore to $L_m$.  By Lemma~\ref{lem:NARWHAL}, $\Ocal_{p_m}\cap X_m\subset J^{+}(L_m)$.  But the expansion of the portion of~$N_m$ to the future of~$L_m$ is negative:~$\theta(N_m \cap J^+(L_m)) < 0$, even if~$\theta(N_m)|_{p_m} = 0$.  Thus although~$X_m$ is tangent to~$N_m$ at a point where~$\theta(N_m) = 0$, with motion away from $p$ along $X_{m}$ is into the region of $\theta(\{N_{s}\})$; see Figure~\ref{subfig:XonH}.  Then by the same reasoning that led to Lemma~\ref{lem:NARWHAL}, it must also bend into~$J^+(N_m)$, and therefore~$\Ocal_{p_m} \cap X_m\subset J^{+}(\sigma_m)$.

\begin{figure}[t]
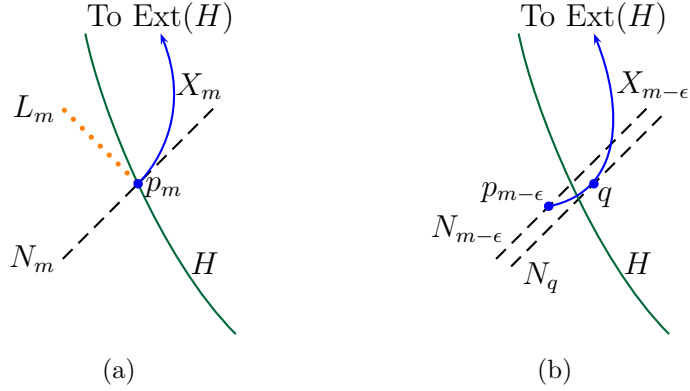

\centering
\subfigure[]{
\includegraphics[page=13]{Figures-pics}
\label{subfig:XonH}
}
\hspace{2cm}
\subfigure[]{
\includegraphics[page=14]{Figures-pics}
\label{subfig:XonHpert}
}
\caption{\subref{subfig:XonH}: the behavior of the midway surface~$X_m$ (solid blue) when it is tangent to~$N_m$ and~$H$ (solid green) at a point~$p_m$ on~$H$.  Note here that the surface caps off smoothly in the suppressed spatial directions at $p$; in this sketch, this is shown as the surface ending there. \subref{subfig:XonHpert}: after a small perturbation through the~$H$-deformable family, the turning point~$p_m$ moves out to~$p_{m-\eps}$.  This requires the surface to develop a new turning point~$q$, as shown.  But~$q$ is not allowed to exist. (Note that here we show~$H$ as timelike in a neighborhood of~$p_m$, but the behavior of~$X$ is the same for other signatures).}
\label{fig:XonH}
\end{figure}

If~$X_m \cap J^{+}(\sigma_m)$ has no intersection with~$H$, then we obtain a contradiction as we did for Theorem~\ref{thm:main}.  If~$X_m \cap J^{+}(\sigma_m)$ does have an intersection with~$H$,~$X_m$ must exit~$\Int(H)$ through that intersection.  But now consider a slight deformation to~$X_{m-\eps}$ along the~$H$-deformable family.  Then the pivot point~$p_m$ must deform to a new pivot point~$p_{m-\eps} \in \Ocal_{p_m}$ of~$X_{m-\eps}$ which lies in~$\Ext(H)$ (since by assumption~$X_m$ was the last extremal surface in this family with a turning point in or on~$H$).  Because Int$(H)$ is open, we can always find a sufficiently small deformation from $X_{m}$ to $X_{m-\epsilon}$ such that~$X_{m-\eps}$ must still enter~$\Int(H)$ before exiting, as shown in Figure~\ref{subfig:XonHpert}.  This implies that there must exist another pivot point~$q \in \Ocal_{p_m}$ where~$X_{m-\eps}$ is tangent to a leaf~$N_q$ from the future.  Now, this pivot point cannot lie in~$\Ext(H)$, since there~$\theta(N_q) > 0$.  But this point also cannot lie in~$\Int(H) \cup H$, since by assumption~$X_m$ was the last surface with a pivot point in~$\Int(H) \cup H$.  We therefore have a contradiction.

\textbf{Case 2}: Next, suppose $X_m$ is not tangent to $H$ at $p_m$.  Then as in the proof of Theorem~\ref{thm:main}, there is a small open subset~$\Ocal_{p_m} \cap X_m^+$ of $X_m$ which lies in $I^{+}(L_m)$.  Likewise, there is a small open subset~$\Ocal_{p_m} \cap X_m^-$ of~$X_m$ which lies in~$I^-(L_m)$.  By the arguments made in Case~1, we have that~$\Ocal_{p_m} \cap X_m^+ \subset J^+(\sigma_m)$, and~$\Ocal_{p_m} \cap X_m^- \subset J^-(\sigma_m)$.  Thus~$\Ocal_{p_m} \cap X_m^+$ and~$\Ocal_{p_m} \cap X_m^-$ can meet only on~$\sigma_m \subset N_m$.  In particular,~$\sigma_m$ divides~$X_m$ into two pieces~$X_m^+$ and~$X_m^-$.

Near~$p_m$,~$X_m^-$ lies to the past of~$N_m$.  Therefore, if it were to be tangent to any other leaf~$N_s$, it would have to be tangent from the future.  This is not allowed in~$\Ext(H)$, since there only turning points from the past are allowed, nor is it allowed in~$\Int(H)$ (by assumption).  Therefore~$X_m^-$ must reach the boundary.

Similarly, near~$p_m$,~$X_m^+$ lies to the future of~$L_m$.  By assumption, the~$\{L_s\}$ foliate~$J^+(H)$, so~$X_m^+$ can only have a turning point with respect to~$\{L_s\}$ outside of~$J^+(H)$.  But the fact that near~$p_m$~$X_m^+$ also lies to the future of~$N_m$ implies that~$X^+$ can only leave~$J^+(H)$ if it turns around with respect to~$\{N_s\}$ on some leaf~$N_s > N_m$ (and if it does, it will be tangent to~$N_s$ from the past). But then it can have no further turning points with respect to~$\{N_s\}$:  if it did, these would be from the future. Such turning points cannot occur in~$\Ext(H)$ by Lemma~\ref{lem:NARWHAL} and are not allowed in~$\Int(H)$ by the assumption that~$X_m$ is the last surface to have a turning point in~$\Int(H) \cup H$. Thus~$X_m^+$ must also reach the boundary.

But if each of~$X_m^+$ and~$X_m^-$ reach the boundary, and they join only at~$\sigma_m$, then~$\partial X$ must consist of (at least) two disconnected pieces, in contradiction with the assumption that it be connected.
\end{proof}

\bibliographystyle{jhep}
\bibliography{all}

\end{document}